  \def\cortext#1{}%
  \def\corref#1{}%
  \def\tnoteref#1{}%
  \def\thanks#1{}%
\newtheorem{theorem}{Theorem}[section]
\newtheorem{lemma}{Lemma}[section]
\newtheorem{definition}{Definition}[section]
\newtheorem{proposition}{Proposition}[section]
\newtheorem{example}{Example}[section]
\newtheorem{remark}{Remark}[section]
\begin{document}
\begin{frontmatter}
\title{Distributed Matrix Multiplication-Friendly Algebraic Function Fields}
\author[sysu]{Yunlong Zhu}
\ead{zhuylong3@mail2.sysu.edu.cn}
\author[sysu,ist]{Chang-An Zhao\corref{cor1}}
\cortext[cor1]{Corresponding author}
\ead{zhaochan3@mail.sysu.edu.cn}

\fntext[fn1]{This work is supported  by  Guangdong Basic and Applied Basic Research Foundation of China~(No. 2025A1515011764).}

\address[sysu]{Department of Mathematics, School of Mathematics, Sun Yat-sen University, Guangzhou 510275, P.R.China.}
\address[ist]{Guangdong Key Laboratory of Information Security Technology, Guangzhou 510006,  P.R.China.}

\date{\today}

\begin{abstract}
In this paper, we introduce distributed matrix multiplication (DMM)-friendly algebraic function fields for polynomial codes and Matdot codes, and present several constructions for such function fields through extensions of the rational function field. The primary challenge in extending polynomial codes and Matdot codes to algebraic function fields lies in constructing optimal decoding schemes. We establish optimal recovery thresholds for both polynomial algebraic geometry (AG) codes and Matdot AG codes for fixed matrix multiplication. Our proposed function fields support DMM with optimal recovery thresholds, while offering rational places that exceed the base finite field size in specific parameter regimes. Although these fields may not achieve optimal computational efficiency, our results provide practical improvements for matrix multiplication implementations. Explicit examples of applicable function fields are provided.
\end{abstract}
\begin{keyword}
Algebraic geometry codes\sep algebraic function fields\sep distributed matrix multiplication\sep Weierstrass semigroup.
\end{keyword}
\end{frontmatter}

\section{Introduction}
Large-scale matrix multiplication over finite fields plays a central role in numerous algorithms with extensive applications in machine learning, signal processing, and natural language processing. This operation is particularly crucial for code-based cryptosystems \cite{Canteaut}, encoding/decoding codes \cite{Beelen,Khalfaoui}. The computational demands of modern datasets typically exceed the capacity of individual machines. An effective solution involves partitioning the matrices into distinct submatrix blocks and distributing them across multiple worker nodes in a network. After parallel computations, worker nodes return results to the master node, which subsequently decodes the aggregated results to reconstruct the final matrix product. This framework is called Distributed Matrix Multiplication (DMM).

DMM implementations face several primary challenges for the master node. First, communication with numerous worker nodes increases computational overhead, while overall computation time becomes constrained by the slowest node (straggler effect), leading to significantly increased execution time. Recent advancements in straggler mitigation are discussed in \cite{Dutta,Dutta-mat,Fidalgo,Lee,Li-Xing,Yu-poly,Yu-straggler}. Second, security concerns arise regarding potential exposure of sensitive information to worker nodes. For current research on Secure DMM (SDMM), we refer the reader to \cite{Aliasgari,Chang,Karpuk,Machado,Makkonen-secure,Makkonen,Matthews,Oliveira}.

\begin{figure}
	\centering
	\includegraphics[scale=0.5]{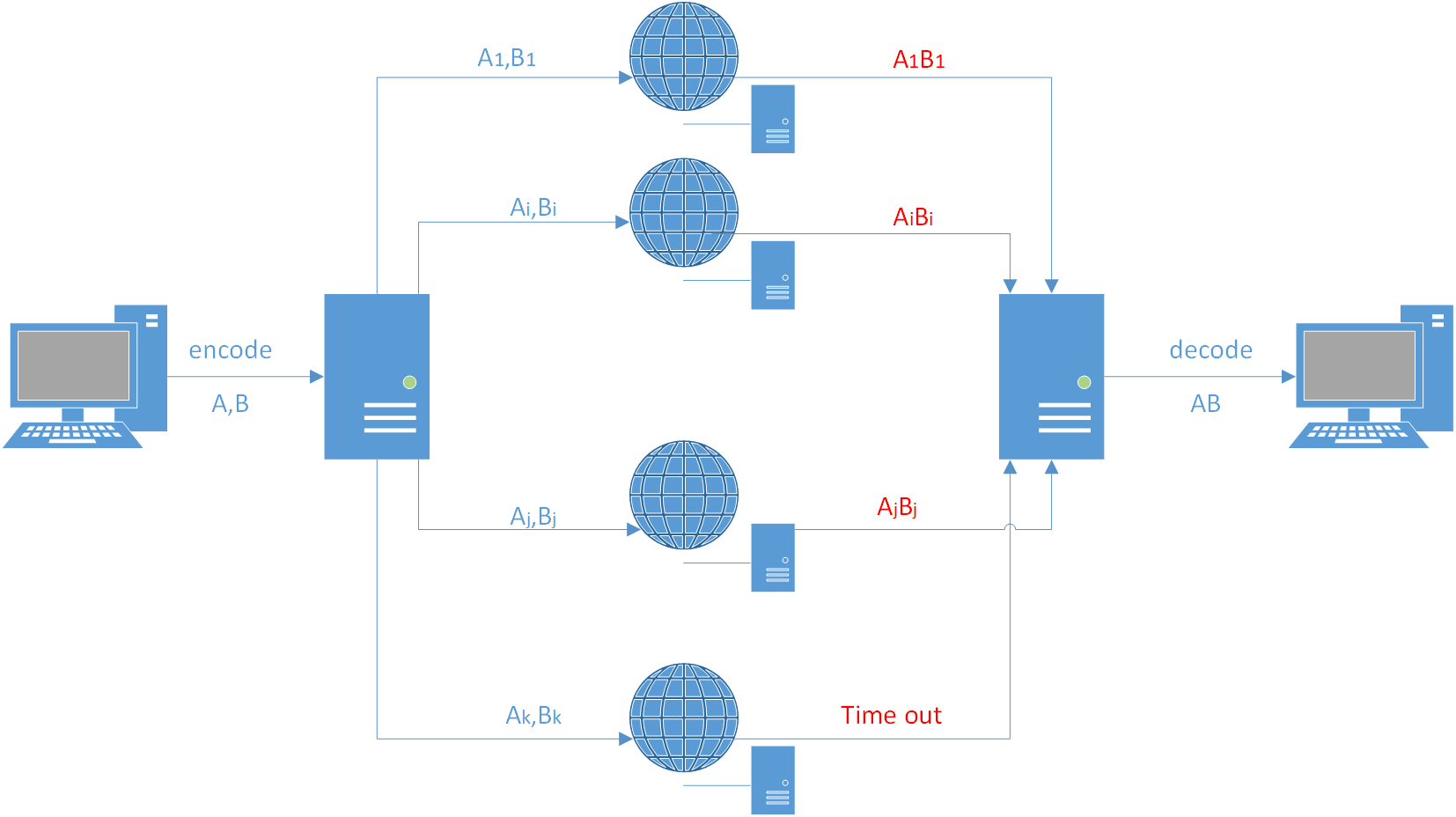}
	\caption{General code-based DMM}
\end{figure}
In this manuscript, we focus on code-based DMM. Given target matrices $A$ and $B$, the master node partitions them into submatrix blocks $A_i$ and $B_j$ respectively. It encodes the computation of $A_iB_j$ using linear error-correcting codes for each pair $(i,j)$. Subsequently, the master node distributes $N$-length codewords to $N$ worker nodes and collects the returned results. Through decoding sufficient partial computations, the master node reconstructs $AB$. Coding theory enables the master node to minimize performance impacts caused by worker node latency.

\subsection*{Related Work}
In \cite{Yu-poly}, a polynomial code DMM was introduced using evaluation codes such as Reed-Solomon codes. Subsequently, new constructions of Matdot and polydot codes were proposed in \cite{Dutta-mat}, also utilizing the rational function field. Compared with polynomial codes, Matdot codes exhibit a better recovery threshold, but worse communication and computational costs per worker node. A common limitation of polynomial codes and Matdot codes is that the number of distinct worker nodes is limited by the field size due to their reliance on the rational function field. To address this constraint, algebraic function fields were employed in \cite{Matthews}, followed by an extension to polynomial algebraic geometry (AG) codes and polydot AG codes in \cite{Fidalgo}. The results in \cite{Fidalgo} were further improved in \cite{Li-Xing}. Notably, algebraic function fields have also been applied in SDMM \cite{Machado,Makkonen}.

Specifically, the authors of \cite{Fidalgo} established a general construction for both polynomial AG codes and Matdot AG codes using a numerical semigroup $S$. Let $c(S)$ denote the conductor of $S$, such that $n\in S$ for all $n\ge c(S)$. Their codes achieve recovery thresholds of $c(S)+mn$ and $2m-1+2c(S)$ in special cases, respectively. They further demonstrated that the optimal recovery threshold for polynomial codes is $g(F)+mn$, where $g(F)$ denotes the genus of the employed function field. However, their analysis was confined to one-point numerical semigroups. In \cite{Li-Xing}, the authors utilized a non-special divisor of degree $g(F)$ to obtain sequences of consecutive pole numbers. Their polynomial AG codes achieve the optimal recovery threshold $g(F)+mn$ when $m\in S$, while their Matdot AG codes attain an explicit recovery threshold of $2g(F)+2m-1$.
\subsection*{Our Contributions}
In this paper, we present some constructions of function fields that enable polynomial AG codes and Matdot AG codes with optimal recovery thresholds using an effective method. We introduce polynomial code-friendly function fields and reprove that the polynomial AG codes maintain a recovery threshold of at least $g(F)+mn$, regardless of divisor selection. These fields offer a novel perspective on the first open problem in \cite{Fidalgo} and a practical approach for DMM. However, we emphasize that these fields may not be optimal for specific multiplication problems. Subsequently, we present two explicit constructions of such fields. For Matdot codes, we prove that the optimal recovery threshold is $2g(F)+2m-1$ when $m \ge g(F)+1$ over a fixed function field $F$. Additionally, we define Matdot code-friendly function fields and provide several constructions. Finally, we discuss limitations of our results.
\subsection*{Organization}
The remainder of this paper is organized as follows. Section II reviews foundational concepts of algebraic function fields and AG codes. In Section III, we establish the polynomial code-friendly function fields and present two constructions. Section IV details our main results on Matdot code-friendly function fields. Section V summarizes a decoding algorithm and computational complexity from prior literature. In Section VI, we conclude this paper with a discussion of limitations and open problems.

\section{Preliminaries and Notations}
This section revisits fundamental definitions and concepts of algebraic function fields and AG codes. Throughout this paper, we denote by $q$ a prime power and by $\mathbb{F}_q$ a finite field. Most content in this section follows \cite{Stich}.
\subsection{Algebraic Function Fields}
Let $q$ be a prime power. An algebraic function field $F/\mathbb{F}_q$ is a finite extension of $\mathbb{F}_q(x)$ with genus $g(F)$, where $x$ is transcendental over $\mathbb{F}_q$. The degree of a place $P$ is defined as
\[
	\deg(P)=[\mathcal{O}_P/P:\mathbb{F}_q].
\]
If $\deg(P)=1$, then $P$ is called rational. Let $N(F)$ denote the number of rational places of $F/\mathbb{F}_q$. The Hasse-Weil bound asserts that
\[
	|q+1-N(F)|\le 2g(\mathcal{X})\sqrt{q}.
\]

A divisor $G$ is defined as a formal sum of places
\[
	G:=\sum\limits_Pn_{P}P
\]
with $n_{P}\in\mathbb{Z}$ and $\deg(G)=\sum\limits_Pn_{P}$. For any function $z\in F$,
\[
	(z):=\sum\limits_Pv_P(z)P=(z)_0-(z)_{\infty}
\]
is called the principal divisor where $v_P$ denotes the discrete valuation at $P$. Given two divisors $G_1,G_2$, we write $G_1\le G_2$ if $n_P(G_1)\le n_P(G_2)$ for all places. The least common multiple (l.c.m) of $G_1$ and $G_2$ is defined as
\[
	\text{\rm l.c.m}(G_1,G_2)=\sum\limits_P\max\{v_P(G_1),v_P(G_2)\}P.
\]
The associated Riemann-Roch space is defined as
\[
	\mathcal{L}(G):=\{z\in F\mid(z)+G\ge0\}\cup\{0\}.
\]
This space forms a vector space over $\mathbb{F}_q$ with dimension denoted by $\ell(G)$. If $\ell(G)=\deg(G)+1-g(F)$, then $G$ is called non-special. Define
\[
	\mathcal{L}(\infty G)=\bigcup_{k\in\mathbb{N}}\mathcal{L}(kG).
\]

An integer $n$ is called a pole number of $P$ if there exists an element $z\in F$ such that $(z)_{\infty}=nP$; otherwise, $n$ is called a gap number. All pole numbers of $P$ form a semigroup denoted by $W(P)$. Using the notation $[0,n]=\{0,1,\ldots,n\}$ and $[n,\infty)=\{n,n+1,\ldots\}$, the Weierstrass Gap Theorem can be stated as follows.
\begin{lemma}\cite[Theorem 1.6.8]{Stich}
Suppose that $P$ is a rational place of a function field $F/\mathbb{F}$. Then there are exactly $g$ gap numbers in $[0,2g(F)-1]$ of $P$. Each $n\in[2g(F),\infty)$ is a pole number of $P$.
\end{lemma}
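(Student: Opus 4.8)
The plan is to reduce everything to the growth behaviour of the Riemann--Roch dimensions $\ell(nP)$ as $n$ increases, exploiting that $P$ is rational so that $\deg(nP)=n$ and consecutive multiples $(n-1)P$ and $nP$ differ by a single place of degree one.

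First I would establish the basic dichotomy. Since $\mathcal{L}((n-1)P)\subseteq\mathcal{L}(nP)$ and the codimension of the former in the latter is at most $\deg P=1$, one has $\ell(nP)-\ell((n-1)P)\in\{0,1\}$ for every $n\ge 1$. Moreover this difference equals $1$ precisely when there exists $z\in\mathcal{L}(nP)\setminus\mathcal{L}((n-1)P)$, which by the defining property of $v_P$ forces $v_P(z)=-n$ and $v_Q(z)\ge 0$ for all $Q\ne P$; equivalently $(z)_{\infty}=nP$, i.e.\ $n$ is a pole number of $P$. Thus $n\ge 1$ is a pole number iff $\ell(nP)=\ell((n-1)P)+1$, and a gap number iff $\ell(nP)=\ell((n-1)P)$. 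Since $\ell(0)=1$ (only constants have no poles), $0$ is a pole number and contributes no gap.

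Next I would invoke the Riemann--Roch theorem from \cite{Stich}: for a canonical divisor $W$ (with $\deg W=2g(F)-2$) one has $\ell(nP)=n+1-g(F)+\ell(W-nP)$. For $n\ge 2g(F)-1$ the divisor $W-nP$ has negative degree, hence $\ell(W-nP)=0$ and $\ell(nP)=n+1-g(F)$. Consequently, for every $n\ge 2g(F)$ we obtain $\ell(nP)-\ell((n-1)P)=(n+1-g(F))-(n-g(F))=1$, so every such $n$ is a pole number; this proves the second assertion. For the count, telescoping the increments over $n=1,\dots,2g(F)-1$ shows that the number of pole numbers in $[1,2g(F)-1]$ equals $\ell((2g(F)-1)P)-\ell(0)=\bigl(2g(F)-g(F)\bigr)-1=g(F)-1$. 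Hence $[1,2g(F)-1]$ contains $(2g(F)-1)-(g(F)-1)=g(F)$ gap numbers, and adjoining the pole number $0$ leaves this unchanged, so $[0,2g(F)-1]$ contains exactly $g(F)$ gap numbers.

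The one genuinely delicate step is the first: correctly pinning down that a jump in $\ell(nP)$ is equivalent to the existence of a function whose pole divisor is exactly $nP$, together with checking the degenerate case $g(F)=0$ (where the interval $[0,-1]$ is empty and every non-negative integer is trivially a pole number). Everything after that is bookkeeping with the degree formula and Riemann--Roch.
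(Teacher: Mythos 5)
Your proof is correct and is essentially the standard argument behind the cited result (Stichtenoth, Theorem 1.6.8), which the paper itself does not reprove: the equivalence between a unit jump in $\ell(nP)$ and $n$ being a pole number, followed by Riemann--Roch to force jumps for $n\ge 2g(F)$ and a telescoping count of the $g(F)-1$ pole numbers in $[1,2g(F)-1]$. Nothing further is needed.
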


Let $F'/\mathbb{F}_{q}$ be an algebraic extension of $F/\mathbb{F}_q$ with $[F':F]<\infty$ and $m\ge1$. A place $P'$ of $F'$ is said lying over $P$ if $P\subseteq P'$, denoted by $P'|P$. The ramification index $e(P'|P)$ is defined as the integer satisfying
\[
	v_{P'}(x)=e(P'|P)\cdot v_P(x)
\]for all $x\in F$. If there exists a place $P'$ with $e(P'|P)=[F':F]$, then $P$ is called totally ramified in $F'/F$. The conorm of $P$ is defined as
\[
	\text{Con}_{F'/F}(P):=\sum\limits_{P'|P}e(P'|P)\cdot P'
\]
and
\[
	\text{Con}_{F'/F}\left(\sum\limits_Pn_{P}P\right):=\sum n_P\cdot \text{Con}_{F'/F}(P).
\]
The genus $g(F')$ is determined by the Hurwitz genus formula:
\[
	2g(F')-2=[F':F](2g(F)-2)+\deg(\text{Diff}(F'/F))
\]
where $\text{Diff}(F'/F)$ denotes the different of $F'/F$. We present two extensions relevant to our purpose.
\begin{lemma}\cite[Proposition 6.3.1]{Stich}\label{stich}
Let $F=\mathbb{F}_q(x,y)$ be a function field defined by the equation:
\[
	y^m=a\prod\limits_{i=1}^{\ell}p_i(x)^{n_i}.
\]
with $a\in\mathbb{F}_q$, $\gcd(n,q)=1$, and $\gcd(n,n_i)=1$. Let $d:=\gcd(n,\sum_{i=1}^{\ell}n_i\deg(p_i(x)))$. Then
\begin{itemize}
	\item[(a)] The zero places $P_i$ of $p_i(x)$ are totally ramified in $F/\mathbb{F}_q(x)$. All places $Q_{\infty}$ of $F/\mathbb{F}_q$ satisfying $Q_{\infty}|P_{\infty}$ have ramification index $e(Q_{\infty}|P_{\infty})=\frac{n}{d}$.
	\item[(b)] The genus of $F/\mathbb{F}_q$ is given by
	\[
		g(F)=\frac{(n-1)}{2}\left(-1+\sum_{i=1}^{s}\deg p_i(x)\right)-\frac{d-1}{2}.
	\]
\end{itemize}
\end{lemma}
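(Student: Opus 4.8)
The plan is to recognise $F=\mathbb{F}_q(x,y)$ as a tame Kummer extension of the rational function field $\mathbb{F}_q(x)$ of degree $n$ (here $n=m$, the exponent of the defining equation) and to read both assertions off the standard local description of such extensions, as in Stichtenoth, Propositions~3.7.2--3.7.3. Writing $u:=a\prod_{i=1}^{\ell}p_i(x)^{n_i}$, the first step is to confirm that $y^n-u$ is irreducible over $\mathbb{F}_q(x)$, i.e.\ that $u$ is not a $d'$-th power in $\mathbb{F}_q(x)$ for any divisor $d'>1$ of $n$. This follows because at the zero place $P_i$ of the (pairwise distinct, monic, irreducible) polynomial $p_i(x)$ one has $v_{P_i}(u)=n_i$, the other factors $p_j(x)$ being units there, and $\gcd(n,n_i)=1$ prevents $n\mid d'\,v_{P_i}(u)$ unless $d'=1$. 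Hence $[F:\mathbb{F}_q(x)]=n$, and $\gcd(n,q)=1$ makes every ramification in $F/\mathbb{F}_q(x)$ tame.

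For part~(a) the Kummer ramification criterion says that for a place $P$ of $\mathbb{F}_q(x)$, every place $P'\mid P$ has ramification index $e(P'\mid P)=n/r_P$, where $r_P:=\gcd\bigl(n,v_P(u)\bigr)$. At $P_i$ we have $r_{P_i}=\gcd(n,n_i)=1$, so $P_i$ is totally ramified. At the infinite place $P_\infty$ of $\mathbb{F}_q(x)$ we compute $v_{P_\infty}(u)=-\sum_{i=1}^{\ell}n_i\deg p_i(x)$, so $r_{P_\infty}=\gcd\bigl(n,\sum_{i=1}^{\ell}n_i\deg p_i(x)\bigr)=d$ and hence $e(Q_\infty\mid P_\infty)=n/d$ for every $Q_\infty\mid P_\infty$. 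Every remaining place $P$ of $\mathbb{F}_q(x)$ satisfies $v_P(u)=0$, so $r_P=n$ and $P$ is unramified.

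For part~(b) one feeds this data into the Hurwitz genus formula $2g(F)-2=n\,(2\cdot 0-2)+\deg\text{Diff}(F/\mathbb{F}_q(x))$, using Dedekind's theorem, which in the tame case gives different exponent $e(P'\mid P)-1$ at each $P'$. Only the $P_i$ and $P_\infty$ contribute: the unique place above $P_i$ has degree $\deg p_i(x)$ and different exponent $n-1$, contributing $(n-1)\deg p_i(x)$; the places above $P_\infty$ all have ramification index $n/d$, so by the fundamental identity their residue degrees sum to $d$ and they contribute $(n/d-1)\cdot d=n-d$ in total. Therefore $\deg\text{Diff}(F/\mathbb{F}_q(x))=(n-1)\sum_{i=1}^{\ell}\deg p_i(x)+(n-d)$, and solving for $g(F)$ yields $g(F)=\tfrac{n-1}{2}\bigl(-1+\sum_{i=1}^{\ell}\deg p_i(x)\bigr)-\tfrac{d-1}{2}$; the degenerate case $d=n$, in which $P_\infty$ is unramified, is automatically subsumed since then $n-d=0$.

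The one genuinely delicate step is the bookkeeping for the different: one must use that the $p_i(x)$ are pairwise coprime so that $v_{P_i}(u)=n_i$ exactly, verify that no finite place other than the $P_i$ carries a different contribution, and handle the local degrees correctly (namely $\deg P_i'=\deg p_i(x)$ for the place above $P_i$, and $\sum_{Q_\infty\mid P_\infty}f(Q_\infty\mid P_\infty)=d$) when assembling $\deg\text{Diff}$. Everything else reduces to a direct application of the cited Kummer-extension results and the Hurwitz formula already recorded above.
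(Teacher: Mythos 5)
Your argument is correct and is essentially the standard proof of the cited result: the lemma is quoted in the paper without proof from Stichtenoth, and his Proposition 6.3.1 is likewise obtained by specializing the Kummer-extension machinery (ramification index $n/\gcd(n,v_P(u))$ at each place, tameness from $\gcd(n,q)=1$, and the Hurwitz/different computation) exactly as you do. Your bookkeeping of the different, including the total contribution $n-d$ above $P_\infty$ via $\sum_{Q_\infty\mid P_\infty} f(Q_\infty\mid P_\infty)=d$, reproduces the stated genus formula correctly.
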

\begin{lemma}\cite[Proposition 3.1]{Navarro}\label{navarro}
Let $F=\mathbb{F}_q(x,y)$ be a function field defined by the equation:
\[
	\sum\limits_{i=0}^r a_iy^{p^i}=\prod\limits_{i=1}^{\ell}p_i(x)^{n_i}
\]
with $a_r,a_0\neq0$, and $\gcd(q,\sum_{i=1}^{\ell}n_i\deg(p_i(x)))=1$. Let $p_{i_1},\ldots,p_{i_{k}}$ be polynomials with $n_{i_j}<0$. Then
\begin{itemize}
	\item[(a)] The zero places $P_{i_j}$ of $p_{i_j}(x)$ and the pole place $P_{\infty}$ of $x$ are totally ramified in $F/\mathbb{F}_q(x)$.
	\item[(b)] Let $Q_{i_j}$ and $Q_{\infty}$ be places of $F/\mathbb{F}_q$ with $Q_{i_j}|P_{i_j}$ and $Q_{\infty}|P_{\infty}$. Then
	\[
		(p_{i_j})=q^rQ_{i_j}-q^r\deg(p_{i_j})Q_{\infty}
	\]
	and
	\[
		(y)_{\infty}=\sum\limits_{j=1}^k-n_{i_j}Q_{i_j}+Q_{\infty}.
	\]
\end{itemize}
\end{lemma}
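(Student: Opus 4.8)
The plan is to exploit that the left-hand side $\phi(Y):=\sum_{i=0}^{r}a_iY^{p^i}$ is a separable additive ($\mathbb{F}_p$-linear) polynomial in $Y$ of degree $p^r$: separability comes from $a_0\neq 0$ and the leading degree from $a_r\neq 0$. Thus $F=\mathbb{F}_q(x,y)$ is $\mathbb{F}_q(x)$ with a root of $\phi(Y)=f(x)$ adjoined, where $f(x):=\prod_{i=1}^{\ell}p_i(x)^{n_i}$, so $F/\mathbb{F}_q(x)$ is separable of degree $n:=[F:\mathbb{F}_q(x)]$ dividing $p^r$. Since $p^r$ is a prime power, the fundamental identity $\sum_{P'\mid P}e(P'\mid P)f(P'\mid P)=n$ forces every $e(P'\mid P)$ and $f(P'\mid P)$ to be a power of $p$; so a place $P$ of $\mathbb{F}_q(x)$ is totally ramified as soon as every proper power of $p$ is excluded as the ramification index of a place above it, which is what the valuation count below achieves.

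The one tool is a comparison of valuations on the two sides of $\phi(y)=f(x)$. Let $P$ be a place of $\mathbb{F}_q(x)$ at which $f$ has a pole, and let $P'$ lie over $P$. First $v_{P'}(y)<0$: otherwise $y\in\mathcal{O}_{P'}$, hence $\phi(y)\in\mathcal{O}_{P'}$, contradicting $v_{P'}(f(x))=e(P'\mid P)\,v_P(f)<0$. Once $v_{P'}(y)<0$, the summands $a_iy^{p^i}$ have pairwise distinct valuations $p^i v_{P'}(y)$, the $i=r$ term being strictly the smallest, so $v_{P'}(\phi(y))=p^r v_{P'}(y)$. Equating,
\[
	p^r\,v_{P'}(y)=e(P'\mid P)\,v_P(f).
\]
Apply this at $P=P_\infty$, where $v_{P_\infty}(f)=-\sum_{i=1}^{\ell}n_i\deg p_i$ is prime to $p$ by the hypothesis $\gcd(q,\sum_i n_i\deg p_i)=1$ (and negative, hence a genuine pole, in the case of interest): then $p^r\mid e(P'\mid P_\infty)\cdot\bigl(\sum_i n_i\deg p_i\bigr)$ forces $p^r\mid e(P'\mid P_\infty)$, and with $e(P'\mid P_\infty)\le n\le p^r$ we get $e(P'\mid P_\infty)=n=p^r$. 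So $P_\infty$ is totally ramified, $[F:\mathbb{F}_q(x)]=p^r$, and $v_{Q_\infty}(y)=-\sum_i n_i\deg p_i$ for the unique place $Q_\infty\mid P_\infty$. Repeating at $P=P_{i_j}$ — where $p_{i_j}$ has a simple zero at $P_{i_j}$ and the other factors are units there, so $v_{P_{i_j}}(f)=n_{i_j}$ — and using that $n_{i_j}$ is prime to $p$ together with $e(P'\mid P_{i_j})\le p^r$, the same identity yields total ramification at $P_{i_j}$ and $v_{Q_{i_j}}(y)=n_{i_j}$. This proves (a).

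Part (b) is bookkeeping with conorms. In $\mathbb{F}_q(x)$ the principal divisor of $p_{i_j}(x)$ is $P_{i_j}-(\deg p_{i_j})P_\infty$; pulling it up through $\mathrm{Con}_{F/\mathbb{F}_q(x)}$ and using $e(Q_{i_j}\mid P_{i_j})=e(Q_\infty\mid P_\infty)=p^r$ gives $(p_{i_j})=p^rQ_{i_j}-p^r(\deg p_{i_j})Q_\infty$ in $F$. For $(y)_\infty$, the same count shows $v_{P'}(y)<0$ for a place $P'$ of $F$ exactly when $v_{P'}(f(x))=p^r v_{P'}(y)<0$, i.e.\ exactly when $P'$ lies over a pole of $f$ in $\mathbb{F}_q(x)$; those poles are precisely the $P_{i_j}$ and $P_\infty$, and the pole orders of $y$ there are $-n_{i_j}$ at $Q_{i_j}$ and $\sum_i n_i\deg p_i$ at $Q_\infty$, as computed above. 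Collecting terms, $(y)_\infty=\sum_{j=1}^{k}(-n_{i_j})Q_{i_j}+\bigl(\sum_i n_i\deg p_i\bigr)Q_\infty$, which is the displayed formula once $\sum_i n_i\deg p_i=1$.

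The step I expect to need the most care is the passage from the displayed identity to total ramification, because it uses that the relevant valuation of $f$ is prime to $p$: at $P_\infty$ this is precisely the stated $\gcd$ hypothesis, but at $P_{i_j}$ it needs $p\nmid n_{i_j}$, so one must either add a coprimality assumption on the $n_{i_j}$ (as in the original statement being cited) or verify that factors with $p\mid n_{i_j}$ can be normalised away; similarly the clean "$+Q_\infty$" presupposes the normalisation $\sum_i n_i\deg p_i=1$, which is the natural reading that makes the $\gcd$ hypothesis automatic anyway. By contrast, $[F:\mathbb{F}_q(x)]=p^r$ requires no separate Eisenstein-type irreducibility argument: it drops out of the very same computation at $P_\infty$.
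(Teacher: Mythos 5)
The paper does not prove this lemma---it is quoted from \cite{Navarro} without proof---so there is no internal argument to compare against; judged on its own, your proof is correct and is the standard argument for additive (linearized) extensions: compare valuations on both sides of $\phi(y)=f(x)$, use the strict triangle inequality to get $p^r v_{P'}(y)=e(P'\mid P)v_P(f)$ at poles of $f$, and deduce $e=[F:\mathbb{F}_q(x)]=p^r$ from the coprimality of $v_P(f)$ with $p$; part (b) then follows by conorm bookkeeping. Three remarks. First, your exponent $p^r$ is the right one for the equation as written with $y^{p^i}$ (the paper's own examples confirm it: for $y^9+y^3+y=\cdots$ over $\mathbb{F}_{27}$ one has $(x)_\infty=9Q_\infty$), so the $q^r$ in the displayed conclusion is an artifact of transcribing the $q$-linearized setting of the source; it would have been worth saying so explicitly rather than silently proving a different formula. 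Second, the caveats you flag are genuine defects of the statement as transcribed, not of your argument: total ramification at $P_{i_j}$ does need $p\nmid n_{i_j}$, and the bare ``$+Q_\infty$'' in $(y)_\infty$ does presuppose $\sum_i n_i\deg p_i=1$ (indeed the paper later applies the lemma to $y_2^{\,m}$-type analogues where $\sum_i n_i\deg p_i=-1$ and correctly drops $Q_\infty$ from $(y_2)_\infty$); you have identified exactly the hypotheses under which the displayed conclusions hold, and all of the paper's actual applications satisfy them. Third, the opening assertion that $\sum_{P'\mid P}e(P'\mid P)f(P'\mid P)=p^r$ forces every $e$ and $f$ to be a power of $p$ is false as a general deduction (e.g.\ $4=1\cdot1+3\cdot1$), but it is never used: your real argument gets total ramification from $p^r\mid e(P'\mid P)$ together with $e(P'\mid P)\le n\le p^r$, which is airtight. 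I would simply delete that sentence.
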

\subsection{Algebraic Geometry Codes}
For a given divisor $G$ of $F/\mathbb{F}_q$, let $P_1,\ldots,P_n$ be $n$ pairwise distinct rational places of $F/\mathbb{F}_q$ with $P_i\notin \text{\rm supp}(G)$ for all $i$. Let $D=P_1+\cdots+P_n$, and consider the evaluation map:
\begin{align*}
	\text{ev}_D:  \mathcal{L}(G)&\to \mathbb{F}_q^n,\\
	f&\mapsto (f(P_1),\ldots,f(P_n)).
\end{align*}
The AG code denoted by $C_L(D,G)$ represents the image of $\text{ev}_D$. The parameters of $C_L(D,G)$ are given by:
\[
	k=\ell(G)-\ell(G-D), d\ge n-\deg(G),
\]
where $n-\deg(G)$ is defined as the design distance $d^*$ of $C_L(D,G)$. When $\deg(G)<n$ it follows directly that $\text{ev}_D$ constitutes an embedding and $k=\ell(G)$.

Let $G_{C_L}$ denote the generator matrix of $C_L$. Then $G_{C_L}\in\mathbb{F}_q^{\ell(G)\times n}$ has row rank $k$. Moreover, if $\{f_1,\ldots,f_{k}\}$ forms a basis for $\mathcal{L}(G)$, then
\begin{equation*}
	G_{C_L}=\begin{pmatrix}
		f_1(P_1)&f_1(P_2) &\ldots&f_1(P_n)\\
		f_2(P_1)&f_2(P_2)&\ldots&f_2(P_n)\\
		\vdots&\vdots&\ddots&\vdots\\
		f_{k}(P_1)&f_{k}(P_2)&\ldots&f_{k}(P_n)
	\end{pmatrix}
\end{equation*}
\section{Polynomial Code-Friendly Algebraic Function Fields}
We first review classical polynomial codes proposed in \cite{Yu-poly}, then introduce polynomial code-friendly function fields and present two constructive families.  
\subsection{Classical Polynomial AG Codes}
Consider matrices $A\in\mathbb{F}_q^{r\times s}$ and $B\in\mathbb{F}_q^{s\times t}$ of size $r\times s$ and $s\times t$. Suppose that $m$ and $n$ are divisors of $r$ and $t$ respectively. Partition these matrices into submatrices:
\begin{equation*}
	A=\begin{pmatrix}
		A_1\\
		A_2\\
		\vdots\\
		A_m
	\end{pmatrix},B=\begin{pmatrix}
	B_1,B_2,\ldots,B_n
\end{pmatrix}
\end{equation*}
where $A_i\in\mathbb{F}_q^{\frac{r}{m}\times s}$ and $B_j\in\mathbb{F}_q^{s\times \frac{t}{n}}$. Consequently, the product $AB$ decomposes as:
\begin{equation*}
	AB=\begin{pmatrix}
		A_1B_1&A_1B_2&\ldots&A_1B_n\\
		A_2B_1&A_2B_2&\ldots&A_2B_n\\
		\vdots&\vdots&\ddots&\vdots\\
		A_mB_1&A_mB_2&\ldots&A_mB_n
	\end{pmatrix}.
\end{equation*}
Let $F/\mathbb{F}_q$ be an algebraic function field and $G$ a divisor of $F/\mathbb{F}_q$. Suppose that $\{f_1,\ldots,f_m\}\subset\mathcal{L}(\infty G)$ and $\{g_1,\ldots,g_n\}\subset\mathcal{L}(\infty G)$ are linearly independent function sets. The master node selects two matrix-coefficient polynomials:
\[
	f:=\sum\limits_{i=1}^mA_if_i,\ g:=\sum\limits_{j=1}^nB_jg_j,
\]
satisfying the following condition:
\begin{itemize}
	\item $v_Q(f_ig_j)\neq v_Q(f_{i'}g_{j'})\ \text{\rm if}\ (i,j)\neq(i',j')$.
\end{itemize}
The condition ensures the linear independence of the $mn$ functions $f_ig_j$. Consequently, each submatrix $A_iB_j$ can be recovered as the coefficient of
\[
	h:=fg=\sum\limits_{i=1}^m\sum\limits_{j=1}^nA_iB_jf_ig_j.
\]
corresponding to the monomial $f_ig_j$. Clearly $h\in\mathcal{L}(\infty G)$. 

To compute $AB$ with parallelization and straggler resistance, the master node proceeds as follows: First, select $N$ distinct places $\{P_1,\ldots,P_N\}$ in $F/\mathbb{F}_q$ such that $P_i\notin{\rm supp}(G)$. Then distribute $f(P_i)$ and $g(P_i)$ to worker nodes. Each worker node computes $f(P_i)g(P_i)=h(P_i)$ and returns the result. After retrieving enough products, the master node can reconstruct $AB$ by recovering $h$ through interpolation.

Assume that $h\in\mathcal{L}(kG)^{\frac{r}{m}\times\frac{t}{n}}$ for some $k$. The matrix vectors $(h(P_1),\ldots,h(P_N))$ exactly contain codewords in $C_L(D,kG)$ at each $[i,j]$-th entry position. The code $C_L(D,kG)$ has a design distance $d^*=N-\deg(kG)$, enabling error correction for up to $\lfloor\frac{d^*-1}{2}\rfloor$ straggler worker nodes.

The recovery threshold $R$ is defined as the minimum number of worker node responses needed to recover $h$, which satisfies
\[
	R=\deg((h)_{\infty})+1
\]
through the decoding algorithm in \cite{Fidalgo} (see Section V). This implies requiring at least $R$ evaluation places.
\begin{remark}
In \cite{Yu-poly}, the rational function field $\mathbb{F}_q(x)/\mathbb{F}_q$ and $G=P_{\infty}$, the place at infinity yield the construction
\[
	f_i=x^{i-1},\ g_j=x^{m(j-1)}
\]
and $R=\deg(h)+1=mn$. This construction achieves the optimal recovery threshold provided in \cite[Theorem 1]{Yu-poly}.
\end{remark}
\subsection{DMM-Friendly Algebraic Function Fields}
We extend Proposition 1 in \cite{Fidalgo} to multi-point codes through the following result:
\begin{proposition}\label{op}
Let $R^*$ denote the optimal recovery threshold for polynomial AG codes over $F/\mathbb{F}_q$, defined as the minimum achievable recovery threshold among all computation strategies over $F/\mathbb{F}_q$:
\[
	R^*=\min\{\deg((h)_{\infty})+1|h=fg,f,g\in F\}.
\]
Then $R^*\ge g(F)+mn$.
\end{proposition}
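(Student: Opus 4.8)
The plan is to argue that the $mn$ functions $f_ig_j$ appearing in $h=fg$ are linearly independent over $\mathbb{F}_q$, and then to lower-bound $\deg((h)_\infty)$ by counting pole numbers. Concretely, fix any computation strategy realizing $h=fg$ with $f=\sum_{i=1}^m A_if_i$ and $g=\sum_{j=1}^n B_jg_j$. The defining condition $v_Q(f_ig_j)\neq v_Q(f_{i'}g_{j'})$ for $(i,j)\neq(i',j')$ forces the $mn$ products $f_ig_j$ to be $\mathbb{F}_q$-linearly independent (a nontrivial dependence would, upon taking the place $Q$ of minimal valuation among the terms with nonzero coefficient, produce a contradiction with the strict triangle inequality for valuations). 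Hence the space $V:=\langle f_ig_j\mid 1\le i\le m,\ 1\le j\le n\rangle_{\mathbb{F}_q}$ has dimension exactly $mn$, and $V\subseteq \mathcal{L}((h)_\infty)$ since each $f_ig_j$ divides... more precisely, each $f_ig_j$ lies in $\mathcal{L}(\infty G)$ and, because $h\in\mathcal{L}((h)_\infty)$ and the $f_ig_j$ are the "monomials" of $h$ with valuations controlled by those of $h$, one checks $(f_ig_j)_\infty\le (h)_\infty$ coordinatewise. Thus $\ell((h)_\infty)\ge mn$.

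Next I would invoke the Riemann–Roch theorem in the inequality form $\ell((h)_\infty)\le \deg((h)_\infty)+1-g(F)+\ell(W-(h)_\infty)$, or more simply the Riemann inequality $\ell(D)\le \deg(D)+1$ valid for every divisor $D$ together with the fact that for $\deg(D)\ge 2g(F)$ equality in the Riemann–Roch sense gives $\ell(D)=\deg(D)+1-g(F)$. Combining $mn\le \ell((h)_\infty)\le \deg((h)_\infty)+1-g(F)$ — which holds once $\deg((h)_\infty)$ is large enough, and one must handle the small-degree case separately — yields $\deg((h)_\infty)\ge g(F)+mn-1$, hence $R=\deg((h)_\infty)+1\ge g(F)+mn$. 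Since the strategy was arbitrary, $R^*\ge g(F)+mn$.

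The main obstacle is making the step $\ell((h)_\infty)\le \deg((h)_\infty)+1-g(F)$ fully rigorous: the clean bound $\ell(D)=\deg(D)+1-g(F)$ only holds for non-special $D$ (in particular for $\deg D\ge 2g(F)-1$), while for small-degree divisors one only has $\ell(D)\le \deg(D)+1$, which gives the weaker $\deg((h)_\infty)\ge mn-1$. To close the gap one should argue that if $\deg((h)_\infty)<2g(F)$ then the number of pole numbers of the relevant divisor is further restricted by the Weierstrass-type gap count (Lemma 1.1, generalized to arbitrary divisors via the argument that there are at least $g(F)$ "gaps" below $\deg((h)_\infty)+g(F)$), forcing $\deg((h)_\infty)\ge g(F)+mn-1$ in that regime as well; alternatively, one reduces to the one-point situation by noting that $(h)_\infty$ can be bounded below in degree by the pole structure at a single place. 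A secondary subtlety is verifying $(f_ig_j)_\infty\le (h)_\infty$ rather than merely $f_ig_j\in\mathcal{L}(\infty G)$; this follows from the valuation-distinctness hypothesis, which guarantees that at each place the pole order of $h$ equals the maximum of the pole orders of its constituent terms, so no cancellation lowers $\deg((h)_\infty)$.
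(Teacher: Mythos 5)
Your route is essentially the paper's: show that the $mn$ products $f_ig_j$ are linearly independent, deduce $\ell(D)\ge mn$ for the relevant pole divisor $D$, and convert this into a degree bound via Riemann--Roch. Your ``secondary subtlety'' about whether $(f_ig_j)_\infty\le (h)_\infty$ is best resolved not by a valuation argument at every place (cancellation away from $Q$ really can occur for particular inputs) but by quantifying over inputs: since the scheme must recover $A_iB_j$ for \emph{all} $A,B$, the decoder works with the whole space $\operatorname{span}\{f_ig_j\}$, so the divisor governing the threshold is $D=\mathrm{l.c.m.}_{i,j}\bigl((f_ig_j)_\infty\bigr)$, which contains all $mn$ independent products by construction. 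This is what the paper's (loosely stated) ``$(h)_\infty=\mathrm{l.c.m.}(D_i,E_j)$'' is driving at.

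The genuine gap is exactly the one you flagged and could not close: passing from $\ell(D)\ge mn$ to $\deg(D)\ge mn+g(F)-1$ requires $D$ to be non-special, and neither your sketch nor the paper (which simply writes ``the Riemann--Roch theorem implies'' at this point) establishes that. Your proposed patches cannot work unconditionally, because the implication is false for special divisors. Concretely, on a hyperelliptic function field of genus $4$ with a rational Weierstrass point $Q$ and $(x)_\infty=2Q$, take $m=n=2$, $f_i=x^{i-1}$, $g_j=x^{2(j-1)}$: the four products $1,x,x^2,x^3$ have distinct valuations at $Q$, $D=6Q$ is special with $\ell(6Q)=4$, and $R=\deg(D)+1=7<8=g(F)+mn$. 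So the bound (and the proposition as stated) fails when $mn\le g(F)$. It does hold when $mn\ge g(F)+1$: if $D$ is non-special, Riemann--Roch gives $\deg(D)\ge mn+g(F)-1$ directly, and if $D$ is special, Clifford's theorem gives $\ell(D)\le\frac{1}{2}\deg(D)+1$, hence $\deg(D)\ge 2mn-2\ge mn+g(F)-1$. The correct completion of your argument is therefore this case split under the implicit hypothesis $mn\ge g(F)+1$, not a Weierstrass-gap count; credit to you for locating the weak point that the paper's own proof glosses over.
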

\begin{proof}
To recover $AB$, the master node requires linearly independent sets $\{f_i\}$ and $\{g_j\}$ such that $\{f_ig_j\}$ are linearly independent. Suppose that $(f_i)_{\infty}=D_i$ and $(g_j)_{\infty}=E_j$. Then
\[
	(h)_{\infty}= D := \text{\rm l.c.m}(D_i,E_j|1\le i\le m,1\le j\le n)
\]
where $h=fg$. Since the basis of $\mathcal{L}(D)$ contains at least $mn$ elements, the Riemann-Roch theorem implies that $\deg(D)\ge mn+g(F)-1$. Consequently, the recovery threshold is at least $mn+g(F)$ for all possible $f$ and $g$.
\end{proof}
Subsequently, we introduce the following definition.
\begin{definition}
An $(m,n)$-friendly algebraic function field is a function field $F/\mathbb{F}_q$ with a place $Q$ of $F/\mathbb{F}_q$ such that
\begin{itemize}
	\item the genus satisfies $(m-1)|g(F)$,
	\item there exists $z_1\in F$ with $v_Q(z_1)=1$ and $\deg((z_1)_{\infty})=\frac{g(F)}{m-1}+1$,
	\item there exists $z_2$ such that $(z_2)_{\infty}=mQ$.
\end{itemize}
\end{definition}
Then we have
\begin{theorem}
Suppose that $F/\mathbb{F}_q$ is an $(m,n)$-friendly algebraic function field. Let $f_i=z_1^{i-1}$ for $i=1,\ldots,m$ and $g_j=z_2^{j-1}$ for $j=1,\ldots,n$. Then the functions
\[
	f:=\sum\limits_{i=1}^mA_if_i,\ g:=\sum\limits_{j=1}^nB_jg_j,
\]
satisfy
\[
	v_Q(f_ig_j)\neq v_Q(f_{i'}g_{j'})\ \text{\rm if}\ (i,j)\neq(i',j').
\]
Consequently, they form a polynomial AG code with recovery threshold $R=g(F)+mn$.
\end{theorem}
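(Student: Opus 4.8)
\emph{Proof proposal.} The plan is to localize everything at the place $Q$. First I would compute the $Q$-adic valuation of each product $f_ig_j=z_1^{i-1}z_2^{j-1}$: since $v_Q$ is additive on products and $v_Q(z_1)=1$, $v_Q(z_2)=-m$ (the latter because $(z_2)_\infty=mQ$), we get $v_Q(f_ig_j)=(i-1)-m(j-1)$. These $mn$ integers are pairwise distinct: if $(i-1)-m(j-1)=(i'-1)-m(j'-1)$ then $i-i'=m(j-j')$, and $|i-i'|\le m-1<m$ forces $i=i'$, hence $j=j'$. This is exactly the displayed condition $v_Q(f_ig_j)\neq v_Q(f_{i'}g_{j'})$ for $(i,j)\neq(i',j')$, and, by the strict triangle inequality for valuations (in a vanishing $\mathbb{F}_q$-linear combination the term of minimal $v_Q$ is isolated), it also shows that the $f_ig_j$ --- and a fortiori the $f_i$ and the $g_j$ --- are linearly independent over $\mathbb{F}_q$, so each $A_iB_j$ is recoverable as the coefficient of $f_ig_j$ in $h=fg$.

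Next I would bound the pole divisor of $h$. Put $D_1:=(z_1)_\infty$; by hypothesis $\deg D_1=\tfrac{g(F)}{m-1}+1$, while $v_Q(z_1)=1>0$ means $Q\notin\operatorname{supp}(D_1)$ and $(z_2)_\infty=mQ$ means $z_2$ has no poles outside $Q$. A place-by-place comparison then gives $(f_ig_j)_\infty\le (m-1)D_1+m(n-1)Q=:D$ for all $i,j$: at $Q$ the pole order of $f_ig_j$ is $\max\{0,\,m(j-1)-(i-1)\}\le m(n-1)$; at a place $P\in\operatorname{supp}(D_1)$ it is at most $(i-1)v_P(D_1)\le(m-1)v_P(D_1)$ since $v_P(z_2)\ge 0$ there; at every other place it is $0$. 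Hence every entry of $h$ lies in $\mathcal{L}(D)$, so $(h)_\infty\le D$, and
\[
	\deg D=(m-1)\Bigl(\tfrac{g(F)}{m-1}+1\Bigr)+m(n-1)=g(F)+mn-1 ,
\]
which yields $R=\deg((h)_\infty)+1\le g(F)+mn$. Combined with the lower bound $R\ge R^*\ge g(F)+mn$ supplied by Proposition~\ref{op}, this forces $R=g(F)+mn$, so the strategy is optimal.

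I expect the only genuine work to be the pole-divisor bound in the second paragraph: checking that the poles of $z_1^{i-1}$ and $z_2^{j-1}$ away from $Q$ do not combine to exceed the budget $(m-1)D_1$. This is precisely where the two numerical hypotheses defining an $(m,n)$-friendly field are used --- $(z_2)_\infty=mQ$ confines all poles of $z_2$ to $Q$, so outside $Q$ only $z_1$ contributes, and the exact value $\deg D_1=\tfrac{g(F)}{m-1}+1$ is what makes $\deg D$ equal $g(F)+mn-1$ rather than something larger. Everything else is formal manipulation of the single valuation $v_Q$, and the divisibility condition $(m-1)\mid g(F)$ enters only to make $\deg D_1$ an integer.
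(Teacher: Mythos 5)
Your proof is correct and follows essentially the same route as the paper: distinctness of the valuations $v_Q(f_ig_j)$ via the division-with-remainder argument, then a degree count of the pole divisor of $h$ giving $R=g(F)+mn$. You are in fact somewhat more careful than the paper on two points --- the place-by-place verification that $(h)_\infty\le (m-1)(z_1)_\infty+m(n-1)Q$ (the paper simply asserts $\deg((h)_\infty)=(m-1)\deg((z_1)_\infty)+(n-1)\deg((z_2)_\infty)$), and the use of Proposition~\ref{op} to turn the resulting upper bound into an equality; your sign $v_Q(f_ig_j)=(i-1)-m(j-1)$ is also the correct one given $v_Q(z_2)=-m$, whereas the paper writes $i-1+m(j-1)$.
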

\begin{proof}
First, we have $v_Q(f_ig_j)=i-1+m(j-1)$. Suppose that there exist distinct pairs $(i,j)\neq(i',j')$ such that $i-1+m(j-1)=i'-1+m(j'-1)$. This implies $i-i'=m(j'-j)$, leading to a contradiction since $1\le i,i'\le m$. Thus the $f_ig_j$ are $mn$ linearly independent functions that form a polynomial AG code. The recovery threshold $R$ is computed as:
\begin{align*}
	R&=\deg((h)_{\infty})+1\\
	&=(m-1)\deg((z_1)_{\infty})+(n-1)\deg((z_2)_{\infty})\\
	&=(m-1)\left(\frac{g(F)}{m-1}+1\right)+(n-1)m\\
	&=g(F)+mn.
\end{align*}
\end{proof}
\begin{remark}
For any $(m,n)$-friendly algebraic function field $F/\mathbb{F}_q$, this construction naturally provides an effective method to obtain polynomial AG codes with optimal recovery threshold. This result agrees with \cite{Yu-poly} where $g(\mathbb{F}_q(x))=0$.
\end{remark}
\subsection{Constructions of DMM-Friendly Algebraic Function Fields}
Let $\ell\ge2$ be a positive integer. Suppose that $a_1,\ldots,a_{\ell},b_1,\ldots,b_{\ell-1}$ are distinct elements of $\mathbb{F}_q$. We present the following result:
\begin{theorem}\label{poly}
Let $F=\mathbb{F}_q(x,y)$ be a function field defined by the equation:
\[
	y^m=\frac{\prod_{i=1}^{\ell}(x-a_i)}{\prod_{i=1}^{\ell-1}(x-b_i)}.
\]
Then $F/\mathbb{F}_q$ is an $(m,n)$-friendly algebraic function field.
\end{theorem}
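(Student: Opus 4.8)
The plan is to apply Lemma~\ref{stich} to the defining equation, viewed as $y^m=a\prod_{i}p_i(x)^{n_i}$ with $a=1$ and the $2\ell-1$ distinct linear polynomials $p_i(x)=x-a_i$ ($1\le i\le\ell$, exponent $n_i=+1$) and $p_i(x)=x-b_{i-\ell}$ ($\ell+1\le i\le 2\ell-1$, exponent $n_i=-1$). The $\gcd$ hypotheses hold trivially ($\gcd(m,n_i)=1$), and we assume $\gcd(m,q)=1$ so the lemma applies. The key numerical input is $\sum_i n_i\deg p_i(x)=\ell-(\ell-1)=1$, whence $d:=\gcd(m,1)=1$. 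Lemma~\ref{stich}(b) then gives $g(F)=\tfrac{m-1}{2}\big(-1+(2\ell-1)\big)-\tfrac{d-1}{2}=(m-1)(\ell-1)$, so $(m-1)\mid g(F)$ with $g(F)/(m-1)=\ell-1$; this settles the first bullet of the definition and turns the pole-degree target in the second bullet into $\deg((z_1)_\infty)=\ell$.

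For the distinguished place I will take $Q:=Q_\infty$, the place of $F$ above the pole $P_\infty$ of $x$. By Lemma~\ref{stich}(a) (with $d=1$) this place is unique and totally ramified, $e(Q_\infty\mid P_\infty)=m$; since $e\cdot f=[F:\mathbb{F}_q(x)]=m$ forces residue degree $f=1$, $Q_\infty$ is rational. As the only pole of $x$ in $F$ lies above $P_\infty$ and $v_{Q_\infty}(x)=m\cdot v_{P_\infty}(x)=-m$, we get $(x)_\infty=mQ_\infty$, so the third bullet holds with $z_2=x$. For the second bullet I will compute the divisor of $y$: each zero $P_{a_i}$ of $x-a_i$ and each zero $P_{b_j}$ of $x-b_j$ is totally ramified (Lemma~\ref{stich}(a)) with a unique rational place $Q_{a_i}$, resp.\ $Q_{b_j}$, above it, and comparing valuations through $y^m=\prod_i(x-a_i)/\prod_j(x-b_j)$ yields $v_{Q_{a_i}}(y)=1$, $v_{Q_{b_j}}(y)=-1$, and $v_{Q_\infty}(y)=-1$ (the right-hand side has a pole of order $\ell-(\ell-1)=1$ at $P_\infty$), with no other zeros or poles. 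Hence $(y)_0=\sum_{i=1}^{\ell}Q_{a_i}$ and $(y)_\infty=\sum_{j=1}^{\ell-1}Q_{b_j}+Q_\infty$ both have degree $\ell$. Taking $z_1=1/y$ gives $v_{Q_\infty}(z_1)=1$ and $(z_1)_\infty=(y)_0$ of degree $\ell=g(F)/(m-1)+1$, which is exactly the second bullet with the same place $Q=Q_\infty$. Thus $F/\mathbb{F}_q$ is $(m,n)$-friendly.

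I expect the main obstacle to be the valuation bookkeeping for the divisor of $y$: correctly reading off the ramification indices at the finite branch points $a_i,b_j$ and at infinity, and, crucially, noticing that one must use $z_1=1/y$ rather than $y$ itself, since $Q_\infty$ is a pole (not a zero) of $y$, so that the sign convention $v_Q(z_1)=1$ in the definition is met while the pole divisor still has the required degree $\ell$. A secondary point worth checking is that the Kummer extension genuinely has degree $m$ over $\mathbb{F}_q(x)$ --- which follows because $v_{P_{a_1}}\!\big(\prod_i(x-a_i)/\prod_j(x-b_j)\big)=1$ is coprime to $m$, so the right-hand side is not a proper power in $\mathbb{F}_q(x)$ --- as this is what guarantees that the totally ramified places above $a_i$, $b_j$, and $\infty$ are rational.
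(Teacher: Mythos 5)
Your proposal is correct and follows essentially the same route as the paper: apply Lemma~\ref{stich} to identify the totally ramified places, compute $(y)$ and $(x)_\infty$ via the conorm, and read off $g(F)=(m-1)(\ell-1)$ so that $\deg((z_1)_\infty)=\ell=\frac{g(F)}{m-1}+1$. The one substantive difference is that you take $z_1=1/y$ to get $v_{Q_\infty}(z_1)=+1$, whereas the paper uses $y$ itself and asserts $v_{Q_\infty}(y)=1$ even though its own divisor computation gives $v_{Q_\infty}(y)=-1$; your choice is the one that literally satisfies the definition (both lead to the same recovery threshold).
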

\begin{proof}
Let $P_{\alpha}$ be the place associated with $(x-\alpha)$, and $P_{\infty}$ the pole of $x$ in $\mathbb{F}_q(x)$. By Lemma \ref{stich}, each place $P_{\alpha}$ for $\alpha\in\{a_1,\ldots,a_{\ell},b_1,\ldots,b_{\ell}\}$ is totally ramified in $F/\mathbb{F}_q(x)$, with corresponding place $Q_{\alpha}|P_{\alpha}$ in $F/\mathbb{F}_q$. The place $Q_{\infty}$ lying over $P_{\infty}$ has ramification index $e(Q_{\infty}|P_{\infty})=m$, hence it is also totally ramified. It follows that
\begin{align*}
	(y^m)&=\text{Con}_{F/\mathbb{F}_q(x)}\left(\sum\limits_{i=1}^{\ell}P_{a_i}-\sum\limits_{i=1}^{\ell-1}P_{b_i}-P_{\infty}\right)\\
	&=m\left(\sum\limits_{i=1}^{\ell}Q_{a_i}-\sum\limits_{i=1}^{\ell-1}Q_{b_i}-Q_{\infty}\right),
\end{align*}
which implies that
\[
	(y)=\sum\limits_{i=1}^{\ell}Q_{a_i}-\sum\limits_{i=1}^{\ell-1}Q_{b_i}-Q_{\infty}
\]
with $v_{Q_{\infty}}(y)=1$. Moreover, we have
\[
	(x)_{\infty}=\text{Con}_{F/\mathbb{F}_q(x)}(P_{\infty})=mQ_{\infty}.
\]
The genus of $F$ is 
\[
	g(F)=(\ell-1)(m-1).
\]
Consequently, we obtain
\[
	\deg((y)_{\infty})=\ell=\frac{g(F)}{m-1}+1.
\]
Therefore $F/\mathbb{F}_q$ satisfies the conditions for an $(m,n)$-friendly algebraic function field.
\end{proof}
A special case occurs when $m$ is a power of $p$, specifically $m=p^{u-1}$. We present the following result.
\begin{theorem}\label{linear}
Let $F=\mathbb{F}_q(x,y)$ be a function field defined by the equation:
\[
	{\rm Tr}_{p^{u}/p}(y)=\frac{\prod_{i=1}^{\ell}(x-a_i)}{\prod_{i=1}^{\ell-1}(x-b_i)},
\]
where ${\rm Tr}_{p^{u}/p}(y)$ denotes the trace map of $\mathbb{F}_{p^{u}}$ over $\mathbb{F}_{p}$ with degree $p^{u-1}$. Then $F/\mathbb{F}_q$ is an $(m,n)$-friendly algebraic function field.
\end{theorem}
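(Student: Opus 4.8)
The plan is to mimic the proof of Theorem~\ref{poly} step for step, with Lemma~\ref{navarro} playing the role that Lemma~\ref{stich} played there, and with the genus computed by hand. Writing the trace as the additive separable polynomial ${\rm Tr}_{p^u/p}(y)=\sum_{i=0}^{u-1}y^{p^i}$, the defining equation has the shape required by Lemma~\ref{navarro}, with $r=u-1$, all coefficients equal to $1$ (so $a_0,a_r\neq 0$), and right-hand side $\prod_{i=1}^{\ell}(x-a_i)\cdot\prod_{i=1}^{\ell-1}(x-b_i)^{-1}$; the hypothesis $\gcd\!\bigl(q,\sum_i n_i\deg p_i\bigr)=\gcd\!\bigl(q,\ell-(\ell-1)\bigr)=1$ is automatic. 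Then Lemma~\ref{navarro}(a) gives $[F:\mathbb{F}_q(x)]=p^{u-1}=m$ and that $P_\infty$ and each $P_{b_i}$ are totally ramified, with unique places $Q_\infty\mid P_\infty$ and $Q_{b_i}\mid P_{b_i}$ of ramification index $m$; in particular $(x)_\infty=\text{Con}_{F/\mathbb{F}_q(x)}(P_\infty)=mQ_\infty$, so I would take $z_2:=x$. Lemma~\ref{navarro}(b) gives $(y)_\infty=\sum_{i=1}^{\ell-1}Q_{b_i}+Q_\infty$, so with $Q:=Q_\infty$ the function $z_1:=y$ has a simple pole at $Q$ and $\deg((z_1)_\infty)=(\ell-1)+1=\ell$.

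The one genuine step is to show $g(F)=(\ell-1)(m-1)$; since $m=p^{u-1}$ is a power of the characteristic, $F/\mathbb{F}_q(x)$ is wildly ramified and the genus formula of Lemma~\ref{stich} is unavailable, so I would fall back on the Hurwitz genus formula over $\mathbb{F}_q(x)$. First, the ramified places are exactly $P_\infty$ and $P_{b_1},\dots,P_{b_{\ell-1}}$: at every other place the right-hand side $g(x):=\prod(x-a_i)/\prod(x-b_i)$ is regular, so the reduced polynomial $\sum_{i=0}^{u-1}T^{p^i}-\overline{g(x)}$ is separable (its $T$-derivative is the constant $1$) and the place is unramified. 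At each of the $\ell$ ramified places the pole of $g(x)$ is simple, hence prime to $p$; as in the classical Artin--Schreier computation this forces $v_Q(y)=-1$ at the place $Q$ above, so $1/y$ is a uniformizer there, and the expansion $\sigma_z(1/y)-1/y=-z(1/y)^2+\cdots$ for $0\neq z\in\ker{\rm Tr}_{p^u/p}$ shows every nontrivial automorphism has ramification number $2$. Hence the ramification groups are $G_0=G_1=\mathrm{Gal}(F/\mathbb{F}_q(x))$ of order $m$ and $G_{\ge 2}=\{1\}$, so the different exponent at each ramified place is $2(m-1)$; summing $d(Q\mid P)\deg Q$ over the $\ell$ (rational) ramified places gives $\deg(\text{Diff}(F/\mathbb{F}_q(x)))=2\ell(m-1)$, and $2g(F)-2=m(-2)+2\ell(m-1)$ yields $g(F)=(\ell-1)(m-1)$.

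It then remains to check the three defining conditions of an $(m,n)$-friendly field against $Q=Q_\infty$, exactly as at the end of the proof of Theorem~\ref{poly}: $(m-1)\mid g(F)=(\ell-1)(m-1)$; the function $z_1=y$ satisfies $\deg((z_1)_\infty)=\ell=\frac{g(F)}{m-1}+1$ and has the prescribed order at $Q$; and $z_2=x$ satisfies $(z_2)_\infty=mQ$. Hence $F/\mathbb{F}_q$ is $(m,n)$-friendly.

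The hard part will be the genus computation, because of the wild ramification. What rescues it is that every pole of the right-hand side has order $1$ (prime to $p$): this makes $1/y$ a uniformizer at each ramified place and collapses the ramification filtration to a single jump, giving the clean different exponent $2(m-1)$ and, ultimately, the same genus $(\ell-1)(m-1)$ as in the Kummer case. If one prefers, the genus can instead be quoted from Navarro's Proposition~3.1 in \cite{Navarro}; the Hurwitz argument sketched above is self-contained given Lemma~\ref{navarro}.
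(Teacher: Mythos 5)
Your proof is correct and follows essentially the same route as the paper: Lemma \ref{navarro} for the ramification data and the divisors of $x$ and $y$, then the Hurwitz genus formula with different exponent $2(m-1)$ at each of the $\ell$ totally ramified rational places, yielding $g(F)=(\ell-1)(m-1)$ and the three $(m,n)$-friendliness conditions at $Q=Q_\infty$. The only divergence is that you derive the different exponent by hand from the ramification filtration, whereas the paper cites Proposition 3.7.10 of \cite{Stich}; your version is self-contained and incidentally corrects the paper's notational slip of writing $2(q-1)$ where $2(p^{u-1}-1)=2(m-1)$ is meant.
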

\begin{proof}
By Lemma \ref{navarro}, the places $P_{b_i}$ and $P_{\infty}$ are totally ramified and they are the only ramified places in $F/\mathbb{F}_q(x)$. It follows that
\[
	(y^{p^{u-1}})_{\infty}=\text{Con}_{F/\mathbb{F}_q(x)}(\sum\limits_{i=1}^{\ell-1}P_{b_i}+P_{\infty}),
\]
which implies
\[
	(y)_{\infty}=\sum\limits_{i=1}^{\ell-1}Q_{b_i}+Q_{\infty}.
\]
Furthermore, we have
\[
	(x)_{\infty}=\text{Con}_{F/\mathbb{F}_q(x)}(P_{\infty})=mQ_{\infty}.
\]
The different exponents of all $Q_{b_i}$ and $Q_{\infty}$ satisfy
\[
	d(Q_{b_i}|P_{b_i})=d(Q_{\infty}|P_{\infty})=2(q-1)
\]
since $m_{P_{b_i}}=-v_{P_{b_i}}(\frac{1}{x-b_i})$ in Proposition 3.7.10 \cite{Stich}. Thus, from the Hurwitz genus formula, we have
\begin{align*}
	2g(F)-2&=-2q+\deg\text{Diff}(F/\mathbb{F}_q(x))\\
	&=-2q+\sum\limits_{i=1}^{\ell-1}d(Q_{b_i}|P_{b_i})+d(Q_{\infty}|P_{\infty})\\
	&=2\ell(q-1)-2q.
\end{align*}
This yields
\[
	g(F)=(\ell-1)(p^{u-1}-1).
\]
Consequently, we obtain
\[
	\deg((y)_{\infty})=\ell=\frac{g(F)}{m-1}+1.
\]
Therefore $F/\mathbb{F}_q$ satisfies the conditions for an $(m,n)$-friendly algebraic function field.
\end{proof}
\subsection{Comparisons and Examples}
We summarize existing constructions of polynomial AG codes from \cite{Fidalgo} and \cite{Li-Xing} in Table I. 
\begin{table}[htbp]\label{table1}
	\renewcommand{\arraystretch}{1.5}
	\centering
	\caption{Recovery thresholds of AG-based polynomial codes in \cite{Fidalgo} and \cite{Li-Xing}}
	\begin{tabular}{|c|c|c|}
		\hline
		Ref. & $m\notin W(Q)$ & $m\in W(Q)$\\
		\hline
		Construction 2 in \cite{Fidalgo} & 2$c(Q)+mn$ & 2$c(Q)+mn$ \\
		\hline
		Construction 3 in \cite{Fidalgo} & $c(Q)+m'n$ & $c(Q)+mn$ \\
		\hline
		Construction 4 in \cite{Fidalgo} & $c(Q)+m_n+n$ & $c(Q)+mn$\\
		\hline
		Construction 5 in \cite{Li-Xing} & $g(F)+m'n$ & $g(F)+mn$\\
		\hline
		Construction 6 in \cite{Li-Xing} & $g(F)+m_n+n$ & $g(F)+mn$\\
		\hline
	\end{tabular}
\end{table}

Applying Proposition \ref{op}, we observe that the constructions of polynomial AG codes in \cite{Li-Xing} are optimal for general function fields when $m\in W(Q)$. However, identifying divisors $Q$ with $m \in W(Q)$ is non-trivial. Our $(m,n)$-friendly algebraic function fields offer a practical computational alternative. We provide explicit curve examples computed via SageMath \cite{sage}, with deeper analysis reserved in Section VI.
\begin{example}\label{exp1}
Let $q=25$ and $F=\mathbb{F}_q(x,y)$ be defined by
\[
	y^8=\frac{x^2-3}{x-1}.
\]
Then the places $P_{1}$ and $P_{\infty}$ are totally ramified in $F/\mathbb{F}_q(x)$. Moreover, we have
\[
	(y)_{\infty}=Q_{1}+Q_{\infty},(x)_{\infty}=8Q_{\infty},
\]
with $g(F)=7$, and the number of rational places of $F/\mathbb{F}_q$ is 52.
\end{example}
\begin{example}
Let $q=27$ and $F=\mathbb{F}_q(x,y)$ be defined by
\[
	y^{9}+y^3+y=\frac{(x+\theta)(x-\theta)}{x-1}
\]
where $\theta$ is a primitive element. Then the places $P_{1}$ and $P_{\infty}$ are totally ramified in $F/\mathbb{F}_q(x)$. Furthermore, we have
\[
	(y)_{\infty}=Q_{1}+Q_{\infty},(x)_{\infty}=9Q_{\infty},
\]
with $g(F)=8$, and the number of rational places of $F/\mathbb{F}_q$ is 56.
\end{example}
\section{Matdot Code-Friendly Algebraic Function Fields}
\subsection{Classical Matdot AG Codes}
The Matdot codes introduced in \cite{Dutta-mat} split matrices $A$ and $B$ using an alternative method. Consider two matrices $A\in\mathbb{F}_q^{r\times s}$ and $B\in\mathbb{F}_q^{s\times t}$. Suppose that $m$ is a divisor of $s$. We partition
\begin{equation*}
	A=\begin{pmatrix}
		A_1,A_2,\ldots,A_m
	\end{pmatrix},B=\begin{pmatrix}
		B_1\\
		B_2\\
		\vdots\\
		B_m
	\end{pmatrix}
\end{equation*}
with $A_i\in\mathbb{F}_q^{r\times \frac{s}{m}}$ and $B_j\in\mathbb{F}_q^{\frac{s}{m}\times t}$. Consequently, the product matrix $AB$ is of the form
\begin{equation*}
	AB=\sum\limits_{i=1}^mA_iB_i.
\end{equation*}
Following polynomial codes, the master node chooses two matrix-coefficient polynomials:
\[
	f:=\sum\limits_{i=1}^mA_if_i,\ g:=\sum\limits_{j=1}^mB_jg_j
\]
satisfying:
\begin{itemize}
	\item There exists an integer $d$ such that exactly $m$ pairs $f_ig_i$ satisfy $v_Q(f_ig_i)=d$.
\end{itemize}
Define the product
\[
	h:=fg=\sum\limits_{i=1}^m\sum\limits_{j=1}^mA_iB_jf_ig_j.
\]
The specified condition ensures that the degree-$d$ monomial term of $h$ contains $AB$ as its coefficient.

The algorithm proceeds as follows: The master node chooses $N$ distinct places $\{P_1,\ldots,P_N\}$ of $F/\mathbb{F}_q$ such that $P_i\notin{\rm supp}(G)$, and distributes $f(P_i)$ and $g(P_i)$ to worker nodes. Each worker node computes $f(P_i)g(P_i)=h(P_i)$ and returns the result. The master node then reconstructs $AB$ by interpolating $h$.

From the decoding algorithm in \cite{Fidalgo} (see Section V), the recovery threshold $R$ satisfies
\[
	R=\deg((h)_{\infty})+1,
\]
indicating the minimum number of required responses.
\begin{remark}
In \cite{Dutta-mat}, the authors considered the rational function field $\mathbb{F}_q(x)/\mathbb{F}_q$ and $G=P_{\infty}$. Their construction is given by
\[
	f_i=x^{i-1},\ g_j=x^{m-j}
\]
and $R=\deg(h)+1=2m-1$, which is optimal. However, the optimality of Matdot AG codes depends on the employed function field for a given $m$. If $m\ge g(F)+1$, we will demonstrate later that $R$ is at least $2g(F)+2m-1$. The work in \cite{Fidalgo} establishes optimal thresholds for sparse semigroups, subsequently improved for specific curves in \cite{Li-Xing}.
\end{remark}
\subsection{DMM-Friendly Algebraic Function Fields}
We first establish the following proposition.
\begin{proposition}
Let $R^*$ denote the optimal recovery threshold of Matdot AG codes over $F/\mathbb{F}_q$, defined as
\[
	R^*=\min\{\deg((h)_{\infty})+1|h=fg,f,g\in F\}.
\]
If $m\ge g(F)+1$, then $R^*\ge 2g(F)+2m-1$.
\end{proposition}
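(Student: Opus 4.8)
The plan is to mirror the proof of Proposition~\ref{op}, but to exploit the product structure of $h=fg$ twice rather than once. Recovering $AB=\sum_{i=1}^{m}A_iB_i$ from $h$ forces the families $\{f_1,\dots,f_m\}$ and $\{g_1,\dots,g_m\}$ to be $\mathbb{F}_q$-linearly independent. Set $D_1:=\mathrm{l.c.m}\{(f_i)_\infty:1\le i\le m\}$ and $D_2:=\mathrm{l.c.m}\{(g_j)_\infty:1\le j\le m\}$, so that $\mathrm{span}_{\mathbb{F}_q}\{f_i\}\subseteq\mathcal{L}(D_1)$ and $\mathrm{span}_{\mathbb{F}_q}\{g_j\}\subseteq\mathcal{L}(D_2)$; consequently $\ell(D_1)\ge m$ and $\ell(D_2)\ge m$. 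Since the scheme must decode the worst-case $h$, its pole divisor is $(h)_\infty=\mathrm{l.c.m}\{(f_ig_j)_\infty:1\le i,j\le m\}$ and the recovery threshold equals $\deg((h)_\infty)+1$. Note that a single application of Riemann--Roch to $(h)_\infty$ (as in Proposition~\ref{op}) only yields $R^*\ge 2m+g(F)-1$, so the genus term must be produced twice.

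The point where the hypothesis $m\ge g(F)+1$ enters is the following: it makes $\ell(D_1)\ge m\ge g(F)+1>g(F)$. By Clifford's theorem, any \emph{special} divisor $D$ satisfies $0\le\deg D\le 2g(F)-2$ and $\ell(D)\le\tfrac{1}{2}\deg D+1\le g(F)$; hence $D_1$ cannot be special, and Riemann--Roch gives $\deg D_1=\ell(D_1)+g(F)-1\ge m+g(F)-1$. The identical argument yields $\deg D_2\ge m+g(F)-1$. It then remains to check the pole-additivity estimate $\deg((h)_\infty)\ge\deg D_1+\deg D_2$: for the generic choice of the data blocks $A_i,B_j$ the function $f=\sum A_if_i$ attains pole divisor $D_1$ and $g=\sum B_jg_j$ attains pole divisor $D_2$, and then $(h)_\infty=(fg)_\infty=D_1+D_2$ provided no pole of one factor is absorbed by a zero of the other. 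Combining these gives $R^*=\deg((h)_\infty)+1\ge\deg D_1+\deg D_2+1\ge 2g(F)+2m-1$.

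The main obstacle is exactly this last verification — ruling out that $h=fg$ can have pole divisor of degree strictly smaller than $\deg D_1+\deg D_2$. A priori the $f_i$ could share a common zero at a place where all the $g_j$ have poles, which shrinks $(h)_\infty$ and, with the naive accounting, would undercut the bound. To handle this I would normalize the scheme by the substitution $f_i\mapsto t f_i$, $g_j\mapsto t^{-1}g_j$ for a suitable $t\in F^{\times}$: this changes neither $h$ nor the recovery threshold, but shifts valuations by $v_P(t)$ in opposite directions on the two sides, and one can choose $t$ so that at every place the (new) orders $-\min_i v_P(tf_i)$ and $-\min_j v_P(t^{-1}g_j)$ are never of strictly opposite sign. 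For such a normalized scheme $\mathrm{l.c.m}\{(f_ig_j)_\infty\}=D_1+D_2$ holds on the nose, and the two Riemann--Roch/Clifford estimates above apply verbatim. Verifying that the required $t$ always exists — i.e., that the prescribed valuation ranges are simultaneously realizable, up to adjusting supports away from the relevant places — is the technical heart of the argument.
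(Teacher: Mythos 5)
There is a genuine gap, and it sits exactly where you locate it: the inequality $\deg((h)_\infty)\ge\deg D_1+\deg D_2$. The automatic inequality goes the \emph{other} way: writing $a_P=-\min_i v_P(f_i)$ and $b_P=-\min_j v_P(g_j)$, one has (generically) $\deg((h)_\infty)=\sum_P\deg(P)\max\{0,a_P+b_P\}\le\sum_P\deg(P)\bigl(\max\{0,a_P\}+\max\{0,b_P\}\bigr)=\deg D_1+\deg D_2$, with equality only if $a_P$ and $b_P$ never have strictly opposite signs. A concrete failure: on $\mathbb{F}_q(x)$ take $f_i=x^{i-101}$, $g_j=x^{100+m-j}$; then $\deg D_1+\deg D_2=199+m$ while $\deg((h)_\infty)=2m-2$. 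Your proposed repair by twisting with $t$ cannot be carried out in general: the valuation constraints at the places of $\operatorname{supp}(D_1)\cup\operatorname{supp}(D_2)$ typically force $t$ to be non-constant with a \emph{prescribed} divisor there, and on a curve of genus $\ge1$ that prescribed degree-zero divisor need not be principal, so $t$ must acquire zeros and poles at fresh places $P$ where $a_P=b_P=0$; at every such place the twisted families have $a_P'=-v_P(t)$ and $b_P'=v_P(t)$ of strictly opposite signs, which recreates exactly the defect you are trying to eliminate (and inflates $\deg D_1'+\deg D_2'$ above $\deg((h)_\infty)$ again). Note also that your argument uses only the linear independence of $\{f_i\}$ and $\{g_j\}$; independence alone (via the product-of-linear-systems bound $\dim\langle f_ig_j\rangle\ge 2m-1$ and Riemann--Roch) can only yield $\deg((h)_\infty)\ge 2m+g(F)-2$, one $g(F)$ short, so some use of the Matdot decoding condition is unavoidable.

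The paper's proof takes a different and shorter route that does use that condition. The requirement that exactly the $m$ diagonal products satisfy $v_Q(f_ig_i)=d$ forces the values $-v_Q(f_1),\dots,-v_Q(f_m)$ to be $m$ distinct pole numbers at the single place $Q$; since only $g(F)$ elements of $[0,2g(F)-1]$ are pole numbers, at least $m-g(F)$ of them are $\ge 2g(F)$, whence $\max_i(-v_Q(f_i))\ge m+g(F)-1$, and symmetrically for the $g_j$. The two maxima add up in $-v_Q(h)$, giving $\deg((h)_\infty)\ge 2g(F)+2m-2$ with no cancellation issue, because everything is measured at the one place $Q$. Your Clifford/Riemann--Roch derivation of $\deg D_1\ge m+g(F)-1$ from $m\ge g(F)+1$ is correct and is in some ways cleaner than the paper's appeal to the gap sequence, but to finish you would need to replace the false additivity claim by an argument localized at $Q$ as above.
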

\begin{proof}
The condition for Matdot AG codes requires the master node to select functions $f_1,\ldots,f_m$ such that $v_Q(f_i)$ form a sequence of consecutive integers at a given place $Q$. As there are exactly $g(F)$ gap numbers in the interval $[0,2g(F)-1]$, the master node must identify at least $m-g(F)$ functions $f'_i$ satisfying $-v_Q(f'_i)\ge 2g(F)$. Thus
\[
	\max\{\deg((f_i)_{\infty})|i=1,\ldots,m\}\ge m-g+2g-1=m+g-1,
\]
which implies that
\[
	\deg((h)_{\infty})\ge 2g(F)+2m-2.
\]
Consequently, the optimal recovery threshold satisfies $R^*=\deg((h)_{\infty})+1\ge2g(F)+2m-1$.
\end{proof}
Subsequently, we introduce the following definition.
\begin{definition}
An $m$-friendly algebraic function field is a function field $F/\mathbb{F}_q$ with a place $Q$ of $F/\mathbb{F}_q$ such that
\begin{itemize}
	\item genus $g(F)=m-1$,
	\item there exists $z_1\in F$ with $v_Q(z_1)=1$ and $\deg((z_1)_{\infty})=2$.
\end{itemize}
\end{definition}
Then we have
\begin{theorem}
Suppose that $F/\mathbb{F}_q$ is an $m$-friendly algebraic function field. Let $f_i=z_1^{i-1}$ for $i=1,\ldots,m$ and $g_j=z_1^{m-j}$ for $j=1,\ldots,m$. Then the functions
\[
	f:=\sum\limits_{i=1}^nA_if_i,\ g:=\sum\limits_{j=1}^nB_jg_j,
\]
satisfy
\[
	v_Q(f_ig_j)=m-1\ \text{\rm if and only if}\ i=j.
\]
Consequently, they form a Matdot AG code with recovery threshold $R=g(F)+mn$.
\end{theorem}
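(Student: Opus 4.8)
The plan is to deduce everything from one valuation computation at the place $Q$: first confirm the stated identity, then read off from it the Matdot separation property and the location of $AB$ inside $h=fg$, and finally pin down $(h)_\infty$ to obtain the recovery threshold.

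I would start with the computation. Since $f_i=z_1^{\,i-1}$, $g_j=z_1^{\,m-j}$ and $v_Q(z_1)=1$ by the definition of an $m$-friendly function field,
\[
	v_Q(f_ig_j)=(i-1)\,v_Q(z_1)+(m-j)\,v_Q(z_1)=m-1+(i-j).
\]
Because $1\le i,j\le m$, the integer $i-j$ vanishes exactly when $i=j$, so $v_Q(f_ig_j)=m-1$ if and only if $i=j$. Hence all $m$ diagonal products $f_1g_1,\ldots,f_mg_m$ share the $Q$-valuation $m-1$, while no off-diagonal product does; this is precisely the Matdot condition with distinguished order $d=m-1$. Writing $h=fg=\sum_{i,j}A_iB_j\,z_1^{\,m-1+i-j}$ and grouping terms by the exponent of $z_1$, the component of $h$ with $Q$-valuation $m-1$ is exactly $\bigl(\sum_{i=1}^mA_iB_i\bigr)z_1^{\,m-1}=(AB)\,z_1^{\,m-1}$.

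Next I would argue that this component is recoverable. The exponents of $z_1$ occurring in $h$ lie in $\{0,1,\ldots,2m-2\}$, and $z_1^{0},z_1^{1},\ldots,z_1^{2m-2}$ have pairwise distinct $Q$-valuations $0,1,\ldots,2m-2$, hence are linearly independent over $\mathbb{F}_q$; consequently the expansion $h=\sum_{k=0}^{2m-2}C_k z_1^{k}$ with matrix coefficients $C_k$ is unique, and once the master node reconstructs $h$ by interpolation (using the decoding procedure of \cite{Fidalgo} recalled in Section V) it can solve a linear system for the $C_k$ and read off $C_{m-1}=AB$.

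Finally, for the threshold I would identify the pole divisor of $h$. Each $z_1^{k}$ with $0\le k\le 2m-2$ has pole divisor $k(z_1)_\infty\le(2m-2)(z_1)_\infty$, so every entry of $h$ lies in $\mathcal{L}\bigl((2m-2)(z_1)_\infty\bigr)$, and since the top term $(A_mB_1)z_1^{\,2m-2}$ attains this bound in the worst case, $\deg((h)_\infty)=(2m-2)\deg((z_1)_\infty)$. By the decoding result of \cite{Fidalgo}, $R=\deg((h)_\infty)+1=(2m-2)\deg((z_1)_\infty)+1$; substituting $\deg((z_1)_\infty)=2$ and $g(F)=m-1$ from the definition yields $R=2g(F)+2m-1$, which meets the lower bound of the preceding proposition, so the construction is optimal. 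I expect the only delicate point to be the recoverability step: one must make sure the $Q$-valuation filtration isolates the diagonal term even though distinct off-diagonal products may coincide in valuation with one another. This is immediate from the valuation identity above, so no genuine obstacle arises; the remaining pole-divisor bookkeeping is routine.
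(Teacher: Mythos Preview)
Your argument is correct and follows the same route as the paper: compute $v_Q(f_ig_j)=i-1+m-j$, conclude the Matdot separation condition at level $d=m-1$, and then evaluate $R=\deg((h)_\infty)+1=2(m-1)\deg((z_1)_\infty)+1=4(m-1)+1=2g(F)+2m-1$. You add a more careful recoverability discussion (linear independence of $z_1^0,\ldots,z_1^{2m-2}$ via distinct $Q$-valuations) that the paper leaves implicit, and you correctly arrive at $R=2g(F)+2m-1$ rather than the stated $g(F)+mn$, which is a copy-paste slip from the polynomial-code theorem earlier in the paper.
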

\begin{proof}
Observe that $v_Q(f_ig_j)=i-1+m-j$. Therefore there are exactly $m$ pairs $(f_i,g_i)$ such that $v_Q(f_ig_i)=m-1$. This implies that the functions $f$ and $g$ form a Matdot AG code. The recovery threshold $R$ is computed as follows:
\begin{align*}
	R&=\deg((h)_{\infty})+1\\
	&=2(m-1)\deg((z_1)_{\infty})+1\\
	&=4(m-1)+1\\
	&=2g(F)+2m-1.
\end{align*}
\end{proof}
\subsection{Constructions of DMM-Friendly Algebraic Function Fields}
Suppose that $a_1,a_2,a_3$ are distinct elements of $\mathbb{F}_q$. We present the following result.
\begin{theorem}
Let $F_1=\mathbb{F}_q(x,y_1)$ and $F_2=\mathbb{F}_q(x,y_2)$ be function fields defined by the equations:
\[
	y_1^m=\frac{(x-a_1)(x-a_2)}{x-a_3},
\]
and
\[
	y_2^m=\frac{x-a_3}{(x-a_1)(x-a_2)},
\]
respectively. Then both $F_1/\mathbb{F}_q$ and $F_2/\mathbb{F}_q$ are $m$-friendly algebraic function fields.
\end{theorem}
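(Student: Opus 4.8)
The plan is to verify the two defining conditions of an $m$-friendly function field — namely $g(F)=m-1$ and the existence of $z_1$ with $v_Q(z_1)=1$ and $\deg((z_1)_\infty)=2$ — directly for each of $F_1$ and $F_2$ using Lemma~\ref{stich}, in exactly the style of the proof of Theorem~\ref{poly}. In fact $F_1$ is precisely the $\ell=2$ instance of Theorem~\ref{poly} with $b_1=a_3$, so for $F_1$ it suffices to observe that there $g(F)=(\ell-1)(m-1)=m-1$ and $\deg((y_1)_\infty)=\ell=2$, and that $y_1$ has a simple zero at one of the totally ramified places $Q_{a_1},Q_{a_2}$; any such place may be taken as $Q$ and $z_1:=y_1$.

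For both fields I would apply Lemma~\ref{stich} with $n=m$. For $F_1$ the relevant data are $p_1(x)=x-a_1$, $p_2(x)=x-a_2$ with exponents $+1$ and $p_3(x)=x-a_3$ with exponent $-1$, so $d=\gcd(m,1+1-1)=1$; part~(a) makes $P_{a_1},P_{a_2},P_{a_3}$ and $P_\infty$ all totally ramified, part~(b) gives $g(F_1)=\tfrac{m-1}{2}(-1+3)=m-1$, and taking conorms of the principal divisor $P_{a_1}+P_{a_2}-P_{a_3}-P_\infty$ of the right-hand side (a pole of order one at infinity, since the rational function grows like $x$) yields $(y_1)=Q_{a_1}+Q_{a_2}-Q_{a_3}-Q_\infty$, hence $(y_1)_\infty=Q_{a_3}+Q_\infty$ of degree $2$ and $v_{Q_{a_1}}(y_1)=1$. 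For $F_2$ the data are $p_1(x)=x-a_3$ with exponent $+1$ and $p_2(x)=x-a_1$, $p_3(x)=x-a_2$ with exponents $-1$, so again $d=\gcd(m,1-1-1)=1$, all four places are totally ramified, $g(F_2)=\tfrac{m-1}{2}(-1+3)=m-1$, and here the right-hand side has principal divisor $P_{a_3}+P_\infty-P_{a_1}-P_{a_2}$ in $\mathbb{F}_q(x)$ (a simple zero at infinity, since the function behaves like $1/x$), so $(y_2)=Q_{a_3}+Q_\infty-Q_{a_1}-Q_{a_2}$, giving $(y_2)_\infty=Q_{a_1}+Q_{a_2}$ of degree $2$ and $v_{Q_{a_3}}(y_2)=1$. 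Thus $(Q,z_1)=(Q_{a_1},y_1)$ works for $F_1$ and $(Q,z_1)=(Q_{a_3},y_2)$ works for $F_2$.

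I do not expect a genuine obstacle: the argument is a routine ramification-and-divisor computation. The three points where care is needed are the implicit hypothesis $\gcd(m,q)=1$ that lets us invoke Lemma~\ref{stich} (if $m$ is a power of $p$ one would substitute an Artin--Schreier-type analogue, compare Theorem~\ref{linear}); the correct treatment of the place at infinity, where $F_1$ has a pole and $F_2$ a zero of the defining rational function, which is the only asymmetry between the two cases; and the verification that the two summands of $(z_1)_\infty$ are distinct rational places, which is immediate because each is totally ramified over a distinct rational place of $\mathbb{F}_q(x)$ and hence has residue degree one.
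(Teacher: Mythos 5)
Your proposal is correct and follows essentially the same route as the paper: both reduce to the ramification and divisor computations of Lemma~\ref{stich} in the style of Theorem~\ref{poly} (of which $F_1$ is the $\ell=2$ case), obtaining $g(F_1)=g(F_2)=m-1$, $(y_1)_\infty=Q_{a_3}+Q_\infty$ and $(y_2)_\infty=Q_{a_1}+Q_{a_2}$. You are in fact slightly more careful than the paper in explicitly naming the place $Q$ (a simple zero $Q_{a_1}$ of $y_1$, resp.\ $Q_{a_3}$ of $y_2$) and in flagging the implicit Kummer hypothesis $\gcd(m,q)=1$, which the paper also leaves unstated.
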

\begin{proof}
Following the proof of Theorem \ref{poly}, the places $P_{a_3}$ and $P_{\infty}$ are totally ramified in $F_1/\mathbb{F}_q(x)$ while the places $P_{a_1}$ and $P_{a_2}$ are totally ramified in $F_2/\mathbb{F}_q(x)$. It follows that
\[
	(y_1)_{\infty}=Q_{a_3}+Q_{\infty}
\]
and
\[
	(y_2)_{\infty}=Q_{a_1}+Q_{a_2}.
\]
Both function fields have genus $g(F_1)=g(F_2)=m-1$. Therefore $F_1/\mathbb{F}_q$ and $F_2/\mathbb{F}_q$ satisfy the conditions for being $m$-friendly.
\end{proof}
\begin{remark}
While these function field families correspond to the general Kummer extension, their presented forms avoid finite field inversions, enhancing computational efficiency.
\end{remark}
For the case $m=p^{u-1}$, we present the following result.
\begin{theorem}
Let $F_1=\mathbb{F}_q(x,y_1)$ and $F_2=\mathbb{F}_q(x,y_2)$ be function fields defined by the equations:
\[
	{\rm Tr}_{p^{u}/p}(y_1)=\frac{(x-a_1)(x-a_2)}{x-a_3},
\]
and
\[
{\rm Tr}_{p^{u}/p}(y_2)=\frac{x-a_3}{(x-a_1)(x-a_2)},
\]
respectively. Then both $F_1/\mathbb{F}_q$ and $F_2/\mathbb{F}_q$ are $m$-friendly algebraic function fields.
\end{theorem}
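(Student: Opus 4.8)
The plan is to verify, for each of $F_1$ and $F_2$, the two defining conditions of an $m$-friendly function field: that the genus equals $m-1$, and that there is a place $Q$ and a function $z_1$ with $v_Q(z_1)=1$ and $\deg((z_1)_\infty)=2$. For $F_1$ the genus and the pole divisor of $y_1$ come essentially for free, since the defining equation of $F_1$ is exactly the equation of Theorem \ref{linear} with $\ell=2$, the roles of $a_1,a_2$ being played by the two roots of the numerator and the role of $b_1$ by $a_3$. Thus Theorem \ref{linear} gives $g(F_1)=(\ell-1)(p^{u-1}-1)=m-1$ and $(y_1)_\infty=Q_{a_3}+Q_\infty$, a divisor of degree $2$, and its proof shows that the ramification locus of $F_1/\mathbb{F}_q(x)$ is exactly $\{P_{a_3},P_\infty\}$. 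I would take $z_1=y_1$; it then remains only to check that $y_1$ vanishes to order exactly $1$ at some place, which I treat below together with the corresponding point for $F_2$.

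For $F_2$ one cannot quote Theorem \ref{linear} or Lemma \ref{navarro} verbatim, because in ${\rm Tr}_{p^u/p}(y_2)=w'$ with $w':=\frac{x-a_3}{(x-a_1)(x-a_2)}$ the right-hand side has a simple \emph{zero}, not a pole, at $P_\infty$ (one computes $v_{P_\infty}(w')=1$), so the infinite place behaves differently than in Lemma \ref{navarro}. Instead I would argue directly. The poles of $w'$ are the simple poles at $P_{a_1}$ and $P_{a_2}$, and $w'$ is regular everywhere else. At any place $P$ of $\mathbb{F}_q(x)$ where $w'$ is regular, the reduced defining polynomial ${\rm Tr}_{p^u/p}(Y)-w'(P)$ is separable (its derivative is a nonzero constant), so $P$ is unramified in $F_2/\mathbb{F}_q(x)$; in particular $P_\infty$ and $P_{a_3}$ are unramified. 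At $P_{a_i}$ ($i=1,2$), if $Q\mid P_{a_i}$ has ramification index $e$, then $y_2$ must have a pole at $Q$ (otherwise $v_Q({\rm Tr}_{p^u/p}(y_2))\ge 0$ while $v_Q(w')=-e<0$), so $v_Q({\rm Tr}_{p^u/p}(y_2))=p^{u-1}v_Q(y_2)=-e$; as $e\le p^{u-1}$ this forces $e=p^{u-1}$ and $v_Q(y_2)=-1$. Hence $P_{a_1},P_{a_2}$ are totally ramified, $(y_2)_\infty=Q_{a_1}+Q_{a_2}$ has degree $2$, and the ramification configuration of $F_2/\mathbb{F}_q(x)$ — two totally ramified places, each sitting over a simple pole of the right-hand side — is identical in shape to that of $F_1/\mathbb{F}_q(x)$, so the Hurwitz genus formula yields $g(F_2)=g(F_1)=m-1$.

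It remains to produce, in each case, a place $Q$ with $v_Q(y_i)=1$; take $z_1=y_i$. Since $(y_i)$ is principal and $(y_i)_\infty$ has degree $2$, the zero divisor $(y_i)_0$ has degree $2$, hence is nonzero. The key point is that every zero of $y_i$ lies over a zero of the right-hand side of its defining equation: if $v_Q(y_i)>0$ then the term $y_i$ dominates ${\rm Tr}_{p^u/p}(y_i)$, so $v_Q({\rm Tr}_{p^u/p}(y_i))=v_Q(y_i)>0$, which forces the place of $\mathbb{F}_q(x)$ below $Q$ to be a zero of the right-hand side. For $F_1$ these zeros are $P_{a_1},P_{a_2}$ and for $F_2$ they are $P_{a_3},P_\infty$; in both cases each such place is unramified (established above) and carries a simple zero of the right-hand side, so any $Q$ above it with $v_Q(y_i)>0$ satisfies $v_Q(y_i)=1$. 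As $(y_i)_0$ has positive degree, such a $Q$ exists, and then $v_Q(z_1)=1$ together with $\deg((z_1)_\infty)=2$ and $g(F_i)=m-1$ shows that both $F_1/\mathbb{F}_q$ and $F_2/\mathbb{F}_q$ are $m$-friendly.

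I expect the last step to be the main obstacle. In the Kummer analogue treated in the preceding theorem, $(y_i)_0$ is handed to us as a conorm and is literally a sum of two totally ramified places, so a simple zero is immediate; here $(y_i)_0$ is not described by any ramification computation, and one has to argue — via the valuation estimate above plus a degree count — that an honest order-$1$ zero exists at all. A secondary wrinkle is that $F_2$ sits just outside the literal hypotheses of Lemma \ref{navarro} at the infinite place, so its genus must be obtained by a direct Hurwitz computation rather than quoted; fortunately that computation is identical in shape to the one in the proof of Theorem \ref{linear}. (One could alternatively reduce $F_2$ to the setting of Theorem \ref{linear} by a Möbius change of variable such as $x\mapsto 1/(x-a_1)$, at the cost of allowing an innocuous constant factor on the right-hand side, which does not affect the ramification; I prefer the direct argument since it needs no such generalization.)
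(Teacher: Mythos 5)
Your proof is correct and follows the same basic route as the paper's (ramification analysis of the two extensions, pole divisor of $y_i$, Hurwitz genus formula), but it is more complete in exactly the two places where the paper's proof is terse. First, the paper simply says the argument for $F_2$ "follows the proof of Theorem \ref{linear}," whereas Lemma \ref{navarro} as stated would wrongly predict that $P_\infty$ is totally ramified in $F_2$; your observation that $v_{P_\infty}\bigl(\tfrac{x-a_3}{(x-a_1)(x-a_2)}\bigr)=1>0$, together with the separability of the reduced polynomial ${\rm Tr}_{p^u/p}(Y)-w'(P)$ at places where $w'$ is regular and the valuation argument forcing $e(Q|P_{a_i})=p^{u-1}$, is the justification the paper omits. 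Second, the paper never verifies the condition $v_Q(z_1)=1$ from the definition of $m$-friendliness (it only exhibits $(y_i)_\infty$ as a sum of two simple poles); your degree count on $(y_i)_0$ combined with the observation that every zero of $y_i$ lies over a simple, unramified zero of the right-hand side supplies an honest simple zero. (In fairness, the subsequent Matdot construction only needs a place where $z_1$ has valuation $\pm1$, so a simple pole such as $Q_{a_3}$ would also do, but your reading is faithful to the stated definition.) No gaps.
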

\begin{proof}
Following the proof of Theorem \ref{linear}, the places $P_{a_3}$ and $P_{\infty}$ are totally ramified in $F_1/\mathbb{F}_q(x)$ while the places $P_{a_1}$ and $P_{a_2}$ are totally ramified in $F_2/\mathbb{F}_q(x)$. It follows that
\[
	(y_1)_{\infty}=Q_{a_3}+Q_{\infty}.
\]
and
\[
	(y_2)_{\infty}=Q_{a_1}+Q_{a_2}.
\]
Furthermore, the Hurwitz genus formula yields
\[
	g(F_1)=g(F_2)=p^{u-1}-1.
\]
Therefore $F_1/\mathbb{F}_q$ and $F_2/\mathbb{F}_q$ satisfy the conditions for being $m$-friendly.
\end{proof}
\subsection{Comparisons and Examples}	
The recovery threshold bound provided in \cite{Fidalgo} is $2(d-\delta)+1$, where $d$ and $\delta$ are entirely determined by the Weierstrass semigroup at a single place. This bound was subsequently improved to $2m+2g(F)-1$ in \cite{Li-Xing} for general function fields. We have shown that $2m+2g(F)-1$ is optimal when $m>g(F)$ and that it typically holds since $g(F)$ is often sufficiently small. With our constructions of function fields, we can also obtain Matdot AG codes with optimal recovery threshold. We provide explicit curve examples computed via SageMath \cite{sage}, with deeper analysis reserved in Section VI.
\begin{example}
Let $q=25$ and $F=\mathbb{F}_q(x,y)$ be defined by
\[
	y^3=\frac{x}{x^2-3}.
\]
Then the places $P_{1},P_{\infty}$ are totally ramified in $F/\mathbb{F}_q(x)$. Furthermore, we have
\[
	(y)_{\infty}=Q_{3\theta+1}+Q_{2\theta+4},
\]
where $\theta$ is a primitive element of $\mathbb{F}_q$. The genus $g(F)=2$ and the number of rational places of $F/\mathbb{F}_q(x,y)$ is 46.
\end{example}
\section{Decoding and Complexity}
Suppose that the master code has collected sufficiently many evaluations $\{h(P_1),\ldots,h(P_R)\}$. Each value is exactly a matrix in $\mathcal{L}(kG)^{a\times b}$, where $(a,b)=(\frac{r}{m},\frac{t}{n})$ for polynomial AG codes and $(a,b)=(r,t)$ for Matdot AG codes. Let $h_{i,j}$ denote the $(i,j)$-th entry of $h$ for $(i,j)\in[1,a]\times[1,b]$. Then $(h_{i,j}(P_1),\ldots,h_{i,j}(P_N))$ constitutes a codeword in $C_L(D,kG)$ for each $(i,j)$. Suppose that $G_{C_L}$ is the generator matrix of $C_L(D,kG)$. There exist vectors ${\bf v}_{i,j}$ of length $\ell(kG)$ satisfying
\[
	{\bf v}_{i,j}G_{C_L}=(h_{i,j}(P_1),\ldots,h_{i,j}(P_R))
\]
for all $(i,j)$. Since the rows of $G_{C_L}$ form a basis of $C_L$, the matrix possesses a right inverse $G_{C_L}^{-1}$. Therefore,
\[
	{\bf v}_{i,j}=(h_{i,j}(P_1),\ldots,h_{i,j}(P_R))G_{C_L}^{-1}.
\]
The master node can thus decode the collected results to recover $AB$ by extracting the coefficients of $h$ from the coordinates of ${\bf v}_{i,j}$.

The computation of $G_{C_L}^{-1}$ incurs a total cost $O(\ell(kG)^2R)$. For polynomial AG codes, recovering all coefficients of $h$ requires computing every ${\bf v}_{i,j}$, resulting in total cost $O(ab\ell(kG)^2)=O(\frac{rt}{mn}\ell(kG)^2)$. For Matdot AG codes, only the $d$-th coordinate of each ${\bf v}_{i,j}$ needs recovery, thus the total cost in this case is $O(ab\ell(kG))=O(rt\ell(kG))$.

The computational complexity at worker nodes must also be considered. For polynomial AG codes, each worker node performs matrix multiplication in $\mathbb{F}_q^{\frac{r}{m}\times s}$ and $\mathbb{F}_q^{s\times \frac{t}{n}}$ with complexity $O(\frac{rst}{mn})$. For Matdot AG codes, each worker node performs matrix multiplication in $\mathbb{F}_q^{r\times\frac{s}{m}}$ and $\mathbb{F}_q^{\frac{s}{m}\times {t}}$ with complexity $O(\frac{rst}{m})$. 

We present the recovery thresholds and complexity comparisons in the following table.
\begin{table}[htbp]\label{table2}
\renewcommand{\arraystretch}{1.5}
\centering
\caption{Results of polynomial AG codes and Matdot AG codes}
\begin{tabular}{|c|c|c|c|}
	\hline
	 & Recovery threshold & Decoding complexity & Worker node\\
	\hline
	Polynomial & $g(F)+mn$ & $O(\frac{rt}{mn}\ell(kG)^2+\ell(kG)^2)$ & $O(\frac{rst}{mn})$\\
	\hline
	Matdot  & $2g(F)+2m-1$ & $O(rt\ell(kG)+\ell(kG)^2)$ &$O(\frac{rst}{m})$\\
	\hline
\end{tabular}
\end{table}

\section{Implementation Results of Polynomial Codes}
In this section, we present the computational results of the matrix multiplication for $AB$ through polynomial codes, where $A\in\mathbb{F}_{25}^{20000\times10000}$ and $B\in\mathbb{F}_{25}^{10000\times12000}$ are random matrices over $\mathbb{F}_{25}$. This comparison indicates that our function fields are faster than rational function fields; therefore, we do not consider Matdot codes. All experiments were conducted in Magma V2.21 running on Windows Subsystem for Linux (WSL2) with Ubuntu 22.04, and utilized an AMD 8845HS processor with 32GB DDR5-6400 MHz RAM. Recalling Example \ref{exp1}, we partition
\begin{equation*}
	A=\begin{pmatrix}
		A_1\\
		A_2\\
		\vdots\\
		A_8
	\end{pmatrix}=\begin{pmatrix}
			A'_1\\
			A'_2\\
			A'_3\\
			A'_4
		\end{pmatrix},B=\begin{pmatrix}
	B_1,B_2,\ldots,B_5
\end{pmatrix}=\begin{pmatrix}
	B'_1,B'_2,\ldots,B'_6
\end{pmatrix}.
\end{equation*}
Since the recovery thresholds satisfy $8 \cdot 5 + 7 = 47 \leq 52$ and $4 \cdot 6 = 24 \leq 25$, we are able to select enough places of the genus 7 function field $\mathbb{F}_{25}(x,y)$ in Example \ref{exp1} and the rational function field $\mathbb{F}_{25}(x')$. The master node must wait for the last worker node to complete its computation, which implies that the maximum computation cost among all $A_iB_j$ (and $A'_iB'_j$) should be taken into account. Table \ref{table3} reports the time cost of direct computation.
\begin{table}[htbp]\label{table3}
\renewcommand{\arraystretch}{1.5}
\centering
\caption{Maximal Time cost of direct matrix multiplication}
\begin{tabular}{|c|c|c|c|}
	\hline
	 & $AB$ & $A_iB_j$ & $A'_iB'_j$\\
	\hline
	Time cost & 87.97s& 2.92s & 5.48s\\
	\hline
\end{tabular}
\end{table}

Then we choose
\[
	f:=\sum\limits_{i=1}^8A_iy^{i-1}, g:=\sum\limits_{j=1}^5B_jx^{j-1},
\]
and
\[
	f':=\sum\limits_{i=1}^4A'_ix'^{i-1}, g':=\sum\limits_{j=1}^6B'_jx'^{4(j-1)}.
\]
The maximal time costs of evaluation and computation for $h(P_i)$ and $h'(P_i)$ are listed in Table \ref{table4}. 
\begin{table}[htbp]\label{table4}
\renewcommand{\arraystretch}{1.5}
\centering
\caption{Maximal Time cost of operations in function fields}
\begin{tabular}{|c|c|c|c|c|c|c|}
	\hline
	 & $f(P_i)$ & $g(P_i)$& $f'(P_i)$ & $g'(P_i)$ & $h(P_i)$ & $h'(P_i)$ \\
	\hline
	Time cost & 0.20s & 0.15s & 0.18s & 0.12s & 3.04s & 4.98s\\
	\hline
\end{tabular}
\end{table}
The implementation results show that the matrix multiplication through our DMM-friendly function fields is marginally faster than rational function fields. Meanwhile, we demonstrate that in this implementation, our DMM-friendly algebraic function field $\mathbb{F}_{25}(x,y)$ exhibits enhanced straggler tolerance compared to the rational function field $\mathbb{F}_{25}(x')$, as it enables the utilization of a larger set of worker nodes. In decoding, both $G_{C_L}$ and $G'_{C_L}$ are sufficiently small matrices, which means the computation cost for their inverses $G_{C_L}^{-1}$ and $G_{C_L}^{'-1}$ can be considered negligible. Our code can be found on GitHub:\href{https://github.com/ZckFreedom/Magma-DMM}{https://github.com/ZckFreedom/Magma-DMM}.

\section{Conclusion and Discussion}
In this paper, we established the optimal recovery threshold bounds for both polynomial AG codes and Matdot AG codes. We provided some explicit constructions of algebraic function fields, that enable efficient implementation of these codes with optimal recovery thresholds.

However, our construction does not achieve the minimal possible genus for DMM with a certain parameters of matrices $A$ and $B$, leaving room for improvement. More precisely, the minimum genus of polynomial AG codes $g_{\text{\rm min}}$ should satisfy
\[
	g_{\text{\rm min}}+mn= q+1+2g_{\text{\rm min}}\sqrt{q}-\epsilon
\]
and the minimum genus of Matdot AG codes $g_{\text{\rm min}}$ should satisfy
\[
	2g_{\text{\rm min}}+2m-1= q+1+2g_{\text{\rm min}}\sqrt{q}-\epsilon
\]
for a small $\epsilon$. Note that our fields have a larger genus, and using functions over these fields in practical polynomial AG codes and Matdot AG codes may be more time-consuming compared to applying the constructions from \cite{Fidalgo} or \cite{Li-Xing} on function fields with smaller genus. Nevertheless, we have demonstrated that our results provide an effective method. Compared to the construction in \cite{Fidalgo}, our fields achieve the optimal recovery threshold. In contrast to \cite{Li-Xing}, although the existence of non-special divisors of degree $g(F)$ has been established \cite{Ballet}, results regarding their explicit construction remain limited \cite{Moreno}.

Another limitation is that our fields are not maximal in most cases. This further restricts their application. Consequently, improved methods for constructing algebraic function fields that are DMM-friendly require investigation, and we identify this problem as a focus for future research. Building on the results of polydot codes provided in \cite{Li-Xing}, the study of optimal recovery threshold bounds will also be addressed in our future work.

Moreover, AG code-based DMM could potentially exhibit enhanced performance in straggler mitigation when employing techniques of locally repairable AG codes with multiple recovering sets \cite{Jin}. We also designate this as an objective for subsequent research.
 
\section{Acknowledgment}
This work is supported by Natural Science Foundation of Guangdong Province (No. 2025A1515011764).

\end{document}